\documentclass[a4paper, 12pt]{article}
\usepackage{amsmath,amsthm,amssymb,graphicx}
\usepackage[margin=2.5cm]{geometry}
\usepackage{bbm}
\usepackage{xparse}
\usepackage{color}
\usepackage{enumitem}
\usepackage[square]{natbib}
\setcitestyle{numbers}

\setlength{\parindent}{0pt}
\setlength{\parskip}{6pt}
\usepackage{epstopdf}
\usepackage{chngcntr}
\counterwithin{figure}{section}

\theoremstyle{plain}
\theoremstyle{definition}
\numberwithin{equation}{section}

\newtheorem{definition}{Definition}[section]
\newtheorem{theorem}[definition]{Theorem}
\newtheorem{assumption}[definition]{Assumption}
\newtheorem{remark}[definition]{Remark}

\newtheorem*{claim*}{Claim}
\newtheorem{lemma}[definition]{Lemma}
\newtheorem{corollary}[definition]{Corollary}
\newtheorem{proposition}[definition]{Proposition}

\newtheorem{condition}[definition]{Condition}

\newcommand{\E}{\mathbb{E}}
\newcommand{\D}{\mathbb{D}}
\newcommand{\F}{\mathbb{F}}

\newcommand{\R}{\mathbb{R}}

\renewcommand{\S}{\mathbb{S}}
\newcommand{\bs}{\mathbf{s}}
\newcommand{\Q}{\mathbb{Q}}

\newcommand{\N}{\mathbb{N}}
\renewcommand{\P}{\mathbb{P}}

\newcommand{\XX}{\mathcal{X}}

\newcommand{\PP}{\mathcal{P}}

\newcommand{\FF}{\mathcal{F}}
\newcommand{\MM}{\mathcal{M}}

\newcommand{\BB}{\mathcal{B}}
\renewcommand{\AA}{\mathcal{A}}

\newcommand{\ZZ}{\mathcal{Z}}

\newcommand{\set}{\mathfrak{P}}

\newcommand{\po}{c}

\newcommand{\vmu}{\vec{\mu}}
\newcommand{\MMN}{\MM_{\vmu}}
\newcommand{\MMNS}{\MM_{\vmu}^-}
\newcommand{\MMU}{\MM_{\vmu,\set}}
\newcommand{\MMUS}{\MM_{\vmu,\set}^-}
\newcommand{\lMMUS}{\MM_{\vmu,\set}^{\textrm{loc}}}
\NewDocumentCommand{\MMUE}{O{\vmu}O{\set}m}{\MM_{{#1},{#2},{#3}}}

\NewDocumentCommand{\Dual}{O{\vec{\mu}}O{\set}m}{
V_{#1,#2}^{#3}
 }

\newcommand{\duals}{V_{\vec{\mu},\set}}
\newcommand{\primalp}{P_{\XX,\PP,\set}}
\newcommand{\primalpc}{P_{\XX_c,\PP,\set}}
\newcommand{\primalpp}{P_{\XX_p,\PP,\set}}
\newcommand{\primalppd}{P^d_{\XX_p,\PP,\set_{\vec{T}}}}
\newcommand{\dualp}{V_{\XX,\PP,\set}}
\newcommand{\dualpc}{V_{\XX_c,\PP,\set}}
\newcommand{\dualpp}{V_{\XX_p,\PP,\set}}
\newcommand{\dualppd}{V^d_{\XX_p,\PP,\set_{\vec{T}}}}
\newcommand{\primals}{P_{\vec{\mu},\set}}

\NewDocumentCommand{\Lip}{O{}}{\mathfrak{G}_{#1}(\R)}
\NewDocumentCommand{\Lips}{O{N}O{M}}{\mathfrak{G}_{#1}([-#2,#2])}

\NewDocumentCommand\OneOptWithDefault{O{default}m}{Text using #1 (could be the default) and #2}
\newcommand{\td}{\mathrm{d}}

\newcommand{\Lin}{\mathrm{Lin}}

\newcommand{\indicator}[1]{\mathbbm{1}_{\left\{ {#1} \right\} }}
\newcommand{\indicators}[1]{\mathbbm{1}_{{#1}}}

\newcommand\restr[2]{{
  \left.\kern-\nulldelimiterspace 
  #1 
  \vphantom{\big|} 
  \right|_{#2} 
  }}

\usepackage[
bookmarks,
bookmarksopen=true,
pdftitle={},
pdfauthor={AMG Cox, Zhaoxu Hou, Jan Obloj},
pdfcreator={AMG Cox, Zhaoxu Hou, Jan Obloj},
pdfsubject={},
pdfkeywords={},
colorlinks=false,
anchorcolor=black,
filecolor=magenta, 
menucolor=red, 
linkcolor=black, 
]{hyperref}

\title{Robust pricing and hedging under trading restrictions and the
  emergence of local martingale models\thanks{Zhaoxu Hou gratefully acknowledges the support of the Oxford-Man Institute of Quantitative Finance and Balliol College in Oxford. Jan Ob\l\'oj gratefully acknowledges funding received from the European Research Council under the European Union's Seventh Framework Programme (FP7/2007-2013) / ERC grant agreement no. 335421. Jan Ob\l\'oj is also thankful to the Oxford-Man Institute of Quantitative Finance and St John's College in Oxford for their financial support. }}
\author{
Alexander M.~G.~Cox\thanks{University of Bath, email:
  \texttt{a.m.g.cox@bath.ac.uk}, web: \url{http://www.maths.bath.ac.uk/}$\sim$\url{mapamgc}}
\and Zhaoxu Hou\thanks{University of Oxford, email: \texttt{zhaoxu.hou@maths.ox.ac.uk}, web: \url{http://www.maths.ox.ac.uk/people/profiles/zhaoxu.hou}}
\and Jan Ob\l{}\'{o}j\thanks{University of Oxford, email: \texttt{Jan.Obloj@maths.ox.ac.uk}, web: \url{http://www.maths.ox.ac.uk/people/profiles/jan.obloj}.}
        }
\date{\today}
\RequirePackage[normalem]{ulem} 
\RequirePackage{color}\definecolor{RED}{rgb}{1,0,0}\definecolor{BLUE}{rgb}{0,0,1} 

\begin{document}

\maketitle

\begin{abstract}
We pursue the robust approach to pricing and hedging in which no probability measure is fixed but call or put options with different maturities and strikes can be traded initially at their market prices. We allow for inclusion of robust modelling assumptions by specifying the set of feasible paths. In a discrete time setup with no short selling, we characterise absence of arbitrage and show that if call options are traded then the usual pricing-hedging duality is preserved. In contrast, if only put options are traded a duality gap may appear. Embedding the results into a continuous time framework, we show that the duality gap may be interpreted as a financial bubble and link it to strict local martingales. This provides an intrinsic justification of strict local martingales as models for financial bubbles arising from a combination of trading restrictions and current market prices. 

\end{abstract}

\section{Introduction}

The approach to pricing and hedging of options through considering the dual problem of finding the expected value of the payoff under a risk-neutral measure is both classical and well understood. In a complete market setting it is simply the way to compute the hedging price, as argued by \citet{BlackScholes:73}. In incomplete markets, the method originated in \citet{EKQuenez}, culminating in the seminal work of \citet{FTAP}. Almost as classical is the problem of finding superhedging prices under various constraints on the set of admissible portfolios. Questions of this type arise in \citet{CvitanicKaratzas}, where convex constraints in the hedging problem lead to a dual problem where one looks for the largest expectation of the payoff of the derivative in a class of {\it auxiliary} markets, where the auxiliary markets are a modification of the original markets reflecting the trading constraints. In the special case of markets where participants may not short sell assets, the class of auxiliary markets correspond to the class of supermartingale measures. (For further results in this direction, see e.g.\ \citet{JouiniKallal,CPT,PT,Pulido}.)

In this paper, we combine trading constraints with concepts from robust derivative pricing. In robust pricing, one aims to minimise modelling assumptions by not pre-supposing the existence of a given probabilistic model for the underlying assets. Instead, we replace modelling assumptions through two weaker assumptions: first, we suppose that our observed realisation of the price process will lie in some set $\set$ of possible outcomes, e.g.~the set of paths whose sum of squared differences is bounded by some given constant, or the continuous time analogue, the set of paths with quadratic variation bounded by a given constant; second, we suppose that there are additional options which may be traded at time zero, for prices which are observed in the market. In this paper we will assume that the additional options which are traded are either European put or call options. In particular, we will suppose that at fixed maturity dates, the prices of all puts/calls on the underlying are known. The presence of put and call options fixes the law of the process under the risk-neutral measure in any calibrated model --- this is a fact first observed by \citet{breeden1978prices} --- and hence constrains the set of probability measures over which we optimise. There has recently been substantial interest in robust pricing problems with a literature that can be traced back to the seminal paper of \citet{Hobson}. The results in this paper are based on the discrete-time approach of \citet{mass_transport}, where a duality result is shown using concepts from optimal transport. 

In the discrete time setting, our results can be summarised as follows: we suppose that we are given a sequence of call price functions at maturity dates $T_1 < T_2 < \dots < T_n$. We show that these prices are consistent with absence of certain natural types of arbitrage if and only if they give rise to a sequence of probability measures $\mu_1, \dots, \mu_n$ on $\R^n_+$, which satisfy natural ordering properties. These then, as explained above, correspond to the implied marginal distributions of the asset under feasible risk-neutral measures. (Note that here and throughout, we assume that all assets are denominated in units of some numeraire, for example discounted by the money market account). Classically, the measures would be in convex order. However in the absence of the ability to short-sell the asset, it is not possible to generate an arbitrage when $m_k = \int x \mu_k(dx) > \int x \mu_{k+1}(dx) = m_{k+1}$, and so the expected value of the asset according to the (implied) risk-neutral measure may be smaller at later maturities. We then show that the minimal price of a portfolio involving call options and long positions in the asset, and which superhedges a derivative for every path in $\set$, is equal to the supremum of the expected value of the derivative's payoff, where the supremum is taken over all supermartingale measures which have full support on $\set$, and under which the law of the asset at $T_k$ is equal to $\mu_k$. This result generalises Corollary~1.1 in \citet{mass_transport} by including a restriction to a certain set of paths $\set$ and a short-selling constraint. Observe also that, in the case where the measures $\mu_k$ all have the same mean, which is equal to the initial stock price $s_0$, the class of supermartingale measures is simply the class of martingale measures.

We also consider the case where the set of call options is replaced by put options with the same maturities. Since short selling of the asset is not permitted, one cannot immediately compare to the case where the call options are available to trade, even if the set of possible implied marginal laws remains the same. In this case we show that a duality gap arises when the initial asset price $s_0$ is strictly larger than the implied mean $m_k$ for some maturity $T_k$. In particular, there is no longer equality between the cheapest superhedge and the largest model-consistent price --- rather, we see a difference which can be characterised in terms of the limit behaviour of the put prices as the strike goes to infinity.

The easiest example of this duality gap arises in considering the difference between the implied price of a forward contract written on the asset --- taking the forward to be a contract which pays the holder the value of the asset at some future date $T_k$, then the forward contract will have a model-implied price $m_k = \int x \mu_k(dx)$, which, in the cases of interest, will be strictly smaller than the initial price of the asset $s_0$. In the case where call options are traded, the forward may be superhedged for $m_k$ using call options (the call option with strike $0$ has the same payoff as the forward). In the case where put options are traded, this is not the case --- instead, the cheapest super-replicating strategy will simply be to purchase the asset at time 0, which has cost $s_0$.

Historically, there has been relatively little study of asset prices which are strict supermartingales\footnote{A strict supermartingale is a supermartingale which is not a martingale. Since we only consider non-negative processes over finite time-horizons, a strict supermartingale is therefore a supermartingale which has a non-constant expected value. Similarly, a strict non-negative local martingale is a local martingale which is also a strict supermartingale.} under the risk-neutral measure in the literature. Their main appearance has been as models for the study of financial bubbles, where strict local-martingales are considered. We believe that our results, both in discrete time and in continuous time, contribute to and provide a novel perspective on the existing literature on financial bubbles.

In mathematical finance, the modelling of financial bubbles using local-martingale models can be traced back to \citet{HLW}, with subsequent contributions including \citet{CH,JPS,JPS2}. Before \citet{HLW}, a number of authors observed that, in certain circumstances, models which were only strict local martingales arise naturally and/or are interesting of their own right (and can be attributed some financial interpretation); see \citet{Lewis,DS,LW,Sin}. One of the most common examples of a naturally occurring class of local-martingale models is the class of CEV models, $dS_t = S_t^\alpha dB_t, S_0 = s_0$, where $\alpha>1$. In the case where $\alpha = 2$, one recovers the inverse of a 3-dimensional Bessel process, which was studied in \citet{DS}. More recently, the class of Quadratic Normal Volatility (QNV) models have also been studied, which are mostly strict local-martingales, but typically calibrate well to market data; see \citet{CFR}.

We build our contribution to this literature by embedding the discrete time results into a continuous time framework. Consider a continuous time framework with dynamic trading in the asset and call or put options traded initially for certain fixed maturities. Then the discrete setup is naturally included by considering trading strategies which only rebalance at the maturity dates of the options. Discrete time supermartingale measures are obtained as projections of local-martingale measures which meet the given marginals. The duality gap is preserved when put options trade, and this gap has a possible interpretation as a financial bubble. To make this generalisation, it is necessary to introduce a pathwise superhedging requirement which enforces a collateral requirement. A similar requirement has already been considered in \citet{CH}.
We therefore believe that an important consequence of this paper is the following interpretation of local-martingale models in financial applications: {\it local-martingale models naturally arise due to trading constraints.} 

This has an impact on the existing literature on financial bubbles: intrinsically, we believe that models where asset prices are strict local martingales (under a risk-neutral measure) are models which arise due to constraints on possible trading strategies. They thus correspond exactly to \emph{rational} or \emph{speculative bubbles} in the asset pricing and economics literature. These are usually driven by short-selling constraints and/or disagreement between the agents on the fundamental values due to heterogenous beliefs or overconfidence, see \citet{hugonnier}, \citet{HarrisonKreps:78} and \citet{ScheinkmanXiong:03}.  Strict local martingale models are a very natural class of models for bubbles, since there is a natural notion of a `fundamental' price which diverges from the traded price. However, as we show, this divergence is `rational' and driven by the absence of arbitrage combined with trading restrictions, as in speculative bubbles.  This is different from the case of an `irrational bubble' when divergence between the market price of an asset and its fundamental price is driven by some behavioural aspect of market participants, rather than specific market features. In this sense, an important contribution of this article for the literature on bubbles is to divorce any notions of `irrationality' from the financial study of strict local martingale models.

We also make the observation that, although we present results on local-martingale models in continuous time, our approach is firmly rooted in a discrete-time setup, and all pricing results in continuous time follow essentially from the  corresponding discrete-time results. One interpretation is that these models therefore are the natural discrete-time analogues of local-martingale models (in this sense, our results provide an alternative response to the first criticism discussed in \citet{Protter}; see also \citet{JP}). However, it seems to us that the implication more naturally runs in the other direction: in discrete-time, our models are very natural, and easily specified. In continuous-time, however, local-martingales are very subtle processes, and the difference between a local-martingale and a martingale is not easy to detect --- our paper provides a clear specification of a discrete-time setup which could be interpreted in continuous time as a local-martingale model. As a result, in our setup local-martingale phenomena arise naturally, and reflect specific market conditions. This contrasts with the arguments of e.g.~\citet{GR}, who argue against local-martingale models on the basis that they can always be approximated by martingale models.

Short selling bans as a regulatory tool to discourage speculation and stabilise markets have proved to be popular among emerging markets and during times of financial crisis. During the U.S. subprime mortgage crisis, short selling of 797 financial stocks in U.S. markets was banned by the SEC between September 19, 2008 and October 8, 2008. Around the same time, the South Korean Financial Supervisory Commission imposed an outright prohibition of short selling of any listed stocks in an attempt to curb the spread of malignant rumours in the market. The ban was lifted for non-financial stocks about a year later, while the constraints on financial stocks remained until November 2013. Interestingly, the U.S. and South Korea both have very active derivatives markets and, in both examples, the bans on short selling did not extend to derivative markets. This allowed market makers and investors to use options to hedge portfolios and express pessimistic views. In light of a series of short selling
bans across the globe, the question of their impact on stocks and derivatives markets is once again a matter of concern to academics and policy makers, see e.g.\ \citet{battalio2011regulatory}, \citet{hendershott2013intended}; the current paper represents a theoretical contribution to this literature. \citet{battalio2011regulatory} study the U.S. short selling ban in September 2008 and find that synthetic share prices for banned stocks, computed separately for puts and for calls, become significantly lower than the actual share prices, accompanied by increases in bid-ask spreads. The findings correspond to the setting of our paper with $m_k<s_0$, making it particularly interesting.

Finally, we note that in parallel to our research \citet{FahimHuang:14} and \citet{BayraktarZhou:14} considered discrete time robust pricing and hedging with trading restrictions. \citet{FahimHuang:14} use concepts from optimal martingale transport but assume market input in form of distributions $\mu_i$ already satisfying a set of assumptions which in our paper are characterised in terms of arbitrage opportunities. \citet{BayraktarZhou:14} adopt the quasi-sure analysis of \citet{BouchardNutz:13} with finitely many traded options. As a result, in both cases the pricing-hedging duality holds and no links are made to modelling of financial bubbles in discrete or continuous time. The focus of both papers is on general convex portfolio constraints.

This paper is organised as follows. Section \ref{sec:robustmodelling} discusses the robust modelling framework in discrete time. Sections \ref{sec:dt_calls} and \ref{sec:dt_puts} specialise respectively to the case when call or put options are traded. The latter in particular explores when a duality gap arises. Subsequently Section \ref{sec:continuous} focuses on the continuous time setup. Several proofs are relegated to the Appendix.



\section{Robust framework for pricing and hedging}
\label{sec:robustmodelling}
We consider a financial market with two assets: a risky asset $S$ and a numeraire (e.g.\ the money market account). 
All prices are denominated in the units of the numeraire. In particular, the numeraire's price is thus normalised and equal to one. We assume initially that $S$ is traded discretely in time at maturities $0=T_0<T_1<T_2<\ldots<T_n=T$. This is extended to a continuous time setup in Section \ref{sec:continuous}. 
The asset starts at $S_0=s_0>0$ and is assumed to be non-negative. 
We work on the canonical space with a fixed starting point $\Omega = \{(\omega_0,\ldots,\omega_n)\in \R_+^{n+1} : \omega_0=s_0\}$. The coordinate process on $\Omega$ is denoted $\S=(\S_i)_{i=0}^n$ i.e. 
\begin{align*}
\S_i\,:\, \Omega\to \R_+,\, \S_i(\omega_0,\omega_1,\ldots,\omega_n)=\omega_i,\, i=0,\ldots,n,
\end{align*}
and $\F=(\FF_{i})_{i=1}^n$ is its natural filtration, $\FF_{i}=\sigma(\S_0,\ldots,\S_i)$ for any $i=0,\ldots,n$. 

We pursue here a robust approach and do not postulate any probability measure which would specify the dynamics for $S$. Instead we assume that there is a set $\XX$ of market traded options with prices known at time zero, $\PP(X)$, $X\in \XX$. The trading is frictionless and options in $\XX$ may be bought or sold at time zero at their known prices. Hence we extend $\PP$ to be a linear operator defined on 
$$\Lin(\XX)= \left\{a_0+\sum_{i=1}^m a_i X_i: m\in \mathbb{N}, a_0,a_i\in \R, X_i\in \XX \textrm{ for all } i=1,\ldots,m\right\}.$$
As explained above, the numeraire has a constant price equal to one. Finally, the risky asset $S$ may be traded at any $T_i$, $i=0,\ldots, n$, \emph{however no short selling is allowed}.

We will consider two cases: when $\XX$ is composed of call options or of put options:
\begin{align*}
\XX_c=\{(\S_{i}-K)^+: i=1,\ldots,n,\,\, K\in \R_+\},\quad \XX_p=\{(K-\S_{i})^+: i=1,\ldots,n,\,\, K\in \R_+\}.
\end{align*}
An admissible (semi-static) trading strategy is a pair $(X,\Delta)$ where $X\in \Lin(\XX)$ and $\Delta=(\Delta_j)$ are bounded non-negative measurable functions $\Delta_j:\R^j_{+}\to \R_+$, $j=0,\ldots,n-1$. The total payoff associated to $(X,\Delta)$ is given by 
\begin{align*}
\Psi_{X,\Delta}(\S):=X(\S)+\sum_{j=0}^{n-1}\Delta_j(\S_1,\ldots,\S_j)(\S_{j+1}-\S_j).
\end{align*}
The cost of following such a trading strategy is equal to the cost of setting up its static part, i.e.\ of buying the options at time zero, and is equal to $\PP(X)$.
We denote the class of admissible (semi-static) trading strategies by $\AA_\XX$. We write $\AA_c$ (resp.\ $\AA_p$) for the case $\XX=\XX_c$ (resp.\ $\XX=\XX_p$).
Note that since no short selling is allowed these are genuinely different and, as we will see, will give very different results. Indeed, note that in the former the short selling of call options is allowed, including the strike zero i.e.\ the forward, providing a super-replication of the asset $S$, possibly at a strictly cheaper price than $s_0$. This feature is not present when dealing with put options. 

We are interested in characterising and computing superhedging prices. All the quantities we have introduced are defined pathwise and the superhedging property is also required to hold pathwise. We have also only made mild assumptions on the market mechanisms (e.g.\ no frictions) but no specific modelling assumptions on the dynamics of the assets. A natural way to incorporate beliefs into the robust framework is through specifying the set $\set\subset \Omega$ of ``possible paths", i.e.\ paths we deem feasible and for which the hedging strategies are required to work. This can be thought of as specifying the maximal support of the plausible models. In this way, with the support ranging from all paths to e.g.\ paths in a binomial model, the robust framework can interpolate between model-independent and model-specific setups. The set $\set$ might be obtained through time series analysis of the past data combined with modelling and a given agent's idiosyncratic views and is referred to as the \emph{prediction set}. Note that since there is no probability measure specified and hence no distinction between the \emph{real} and the \emph{risk-neutral} measure, it is very natural to combine two streams of information: time-series of past data and forward-looking option prices. This idea goes back to \citet{Mykland_prediction_set} and we refer to \citet{NadtochiyObloj:14} 
for more details and extended discussion. 

We call the triplet $(\XX,\PP,\set)$ of prediction set, market traded options $\XX$ and their prices, the \emph{robust modelling inputs}. The fundamental financial notions defined below, e.g.\ the arbitrage or the super-replication price, are implicitly relative to these inputs. 

\begin{definition} \label{def:superreplication}
The \emph{super-replication cost} of a derivative with payoff $G:\Omega\to \R$, denoted by $\dualp(G)$, is the smallest initial capital required to finance a semi-static trading strategy which super-replicates $G$ for every path in $\set$, i.e.
\begin{align}\label{eq:dual_robust_hedging_cost}
\dualp(G):=\inf\Big\{\PP(X)\,:\, \exists (X,\Delta)\in \AA_\XX \text{ s.t } \Psi_{X,\Delta}\ge G\,\text{ on }\set\Big\}.
\end{align}
\end{definition}
Note that since $\omega_0=s_0$ for all $\omega\in \Omega$, it is equivalent to see $G$ as a function from $\Omega$ or from $\R_+^n$. We will be tacitly switching between these viewpoints, the former is used when writing $G(\S)$, the latter when imposing conditions on $G$, see e.g.\ \eqref{eq:Gassumption} below.

Our aim is to understand when a pricing-hedging duality holds, i.e.\ when the super-replication price can be computed through the supremum of expectations of the payoff over a suitable class of probabilistic models. 
\begin{definition}\label{def:marketmodel}
A market calibrated model is a probability measure $\P$ on $(\Omega,\F)$ satisfying $\P(\set)=1$ and for any $(X,\Delta)\in \AA_\XX$ 
\begin{align}
\E_{\P}[\Psi_{X,\Delta}(\S)]\le\PP(X),\label{eq:def_market_model_1}
\end{align}
where here, and throughout, we make the convention that $\infty-\infty=-\infty$ so that the LHS in \eqref{eq:def_market_model_1} is always well defined. The set of market calibrated models is denoted $\MM^-_{\XX,\PP,\set}$.
\end{definition}
\begin{remark}\label{rk:primallessthandual}
It follows from the definition that if $\MM^-_{\XX,\PP,\set}\neq \emptyset$ then for any Borel function $G:\Omega\to \R$
\begin{equation}\label{eq:primallessthandual}
\primalp(G):=\sup_{\P\in \MM^-_{\XX,\PP,\set}}\E_{\P}[G(\S)] \leq \dualp(G).
\end{equation}
We sometimes refer to the LHS of the above inequality as the \emph{primal value} and to the RHS as the \emph{dual value}. This duality gap provides us with two different notions for the `price' of the asset. Historically, this has be used as a method for modelling certain financial phenomena where more that one price may appear. A natural interpretation which arises in the literature is to say that the superhedging price $\dualp(G)$ represents the \emph{market price} of $G$, and the primal side, which represents the worst model price, among models consistent with the prices observed in the market, can be thought of as the \emph{fundamental price} of $G$. The case of strict inequality in \eqref{eq:primallessthandual} then admits an interpretation as a \emph{financial bubble} --- that is, a difference between the market price, and the fundamental price. We will consider this question further below.
\end{remark}
\begin{remark}\label{rk:marketmodels}
It follows from the definition that under any market calibrated model $\P$, the canonical process $\S=(\S_i)_{i=1}^{n}$ is a supermartingale. Such a measure is called a supermartingale measure. Furthermore, for any $X\in \XX$, \eqref{eq:def_market_model_1} holds for both $(X,0)$ and $(-X,0)$ so $\P$ is calibrated to options in $\XX$, i.e. $\P$ satisfies $\E_{\P}[X]=\PP(X)$ for any $X\in \XX$.
\end{remark}

\begin{definition}\label{def:no_arbitrage}
We say there is a \emph{robust uniformly strong arbitrage} if there exists a trading strategy $(X,\Delta)\in \AA_\XX$ with a negative price $\PP(X)< 0$ and a non-negative payoff $\Psi_{X,\Delta}(\bs)\ge 0$ for all $\bs\in \set$.
\end{definition}

\begin{remark}\label{rk:strongarbitrage}
By definition, the market admits a robust uniformly strong arbitrage if and only if $\dualp(0)<0$. 
When $\set=\Omega$ the notion of \emph{robust uniformly strong arbitrage} corresponds to the \emph{model independent arbitrage}, see \citet{davis_hobson2007} and \citet{cox2011robust}. In a general robust setting, the existence of an arbitrage in the above sense may depend on the modelling assumptions, expressed through $\set\subset \Omega$, which justifies the terminology. Equally, we stress that the arbitrage is required to be uniform in outcomes $\omega\in \set$ to distinguish from a slightly weaker notion, used in \citet{acciaio2013model}, of a strategy $(X,\Delta)\in \AA_\XX$ with $\PP(X)\leq 0$ and a positive payoff $\Psi_{X,\Delta}(\bs)> 0$. At present it is not clear to us if these two notions are equivalent. We can show that they are equivalent in our setup when $\XX=\XX_p$ or when $\XX=\XX_c$ and either $\set=\Omega$ or property (iii) in Condition \ref{cond:market set-up} below holds\footnote{We chose not to present these argument here as they are lengthy and deviate from the main focus of our paper. Instead they will be part of a separate note on arbitrage in a robust setting, available on request or through the authors' webpages.}.
\end{remark}

The three papers mentioned in the above remark also showed that typically absence of robust uniformly strong arbitrage is not sufficient to guarantee a (robust) fundamental theorem of asset pricing holds and introduced various weaker notions or additional assumptions. Here we follow \citet{cox2011robust}:
\begin{definition}
We say that there is a weak free lunch with vanishing risk (WFLVR) if there exist trading strategies $(X_k,\Delta_k)\in \AA_\XX$ and $(X,\Delta)\in \AA_\XX$ such that $\Psi_{X_k,\Delta_k}\to 0$ pointwise on $\set$, $\lim_k\PP(X_k)$ is well defined with $\lim_k\PP(X_k)< 0$ and $\Psi_{X_k,\Delta_k}\ge \Psi_{X,\Delta}$.
\end{definition}
Note that the requirement that $\lim_k \PP(X_k)$ exists is made with no loss of generality as we could always pass to a subsequence of strategies. 
Note also that a \emph{robust uniformly strong arbitrage} is by definition also a \emph{WFLVR}. A version of the Fundamental Theorem of Asset Pricing in our context, given below in Proposition \ref{proposition:no_arbitrage} for the case $\XX=\XX_c$ and in Proposition \ref{proposition:no_arbitrage put} for the case $\XX=\XX_p$, states that absence of WFLVR is equivalent to existence of a calibrated market model. Further, as in \citet{davis_hobson2007} and \citet{cox2011robust}, we can characterise absence of WFLVR through the properties of $\PP$. It is this feature which prevents us from considering, as in e.g.\  \citet{acciaio2013model}, shifted payoffs $X-\PP(X)$ for traded options and eliminating $\PP$ from the discussion. 

\section{Robust pricing--hedging duality when call options trade}\label{sec:dt_calls}
In this section we consider the market in which call options are traded, i.e.\ $\XX=\XX_c$. Our main result states that we recover the duality known from the case when short selling restrictions are not present. Throughout we assume that $\set$ is a closed subset of $\Omega$.

\subsection{Market input and no arbitrage}
We start by establishing a robust Fundamental Theorem of Asset Pricing for our setting which links absence of arbitrage, properties of call prices and existence of a calibrated market model.
\begin{condition}\label{cond:market set-up} 
Let $\XX=\XX_c$ and $c_i(K):=\PP((\S_i-K)^+)$, $i=1,\ldots, n$, $K\geq 0$. Then
\begin{enumerate}
\item[(i)] \label{item:market set-up 1} $c_i(x)$ is a non-negative, convex, decreasing function of $x$ on $\R_+$, 
\item[(ii)] \label{item:market set-up 2}$s_0\ge c_1(0)\ge\ldots\ge c_n(0)\ge 0$ and $c^{\prime}_{i}(0+)\ge -1$, 
\item[(iii)] \label{item:market set-up 3}$c_i(K)\to 0$ as $K\to \infty$, 
\item[(iv)] For any $x\in\R_+$, $c_i(0)-c_i(x)$ is non-increasing in $i$.\label{item:market set-up 4}
\end{enumerate}
\end{condition}

A robust Fundamental Theorem of Asset Pricing in our setup reads as follows.
\begin{proposition}\label{proposition:no_arbitrage}
Suppose $\set$ is a closed subset of $\Omega$ and $\XX=\XX_c$. Then there is no WFLVR if and
only if there exists a market calibrated model, which then implies 
Condition~\ref{cond:market set-up} holds. Furthermore, if $\set=\Omega$ then Condition~\ref{cond:market set-up} implies absence of WFLVR. 
\end{proposition}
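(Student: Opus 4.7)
The proof splits into three directions giving the chain: a calibrated model implies both Condition~\ref{cond:market set-up} and absence of WFLVR; absence of WFLVR produces a calibrated model; and in the special case $\set=\Omega$, Condition~\ref{cond:market set-up} alone suffices to produce a calibrated model.

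\textbf{Easy direction.} Take $\P\in\MM^-_{\XX,\PP,\set}$. By Remark~\ref{rk:marketmodels}, $\S$ is a $\P$-supermartingale and $c_i(K)=\E_\P[(\S_i-K)^+]$. Items (i) and (iii) of Condition~\ref{cond:market set-up} follow from convexity/monotonicity of $K\mapsto(x-K)^+$ and dominated convergence, with dominant $\S_i\in L^1(\P)$ since $\E_\P[\S_i]\leq s_0$. For (ii) one combines $c_i(0)=\E_\P[\S_i]$, the supermartingale ordering, and $-c'_i(0+)=\P(\S_i>0)\le 1$. For (iv) write $c_i(0)-c_i(K)=\E_\P[\S_i\wedge K]$ and note that $x\mapsto x\wedge K$ is concave non-decreasing, so $(\S_i\wedge K)_i$ is a $\P$-supermartingale. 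To exclude WFLVR, suppose $(X_k,\Delta_k),(X,\Delta)$ witnesses one: then $\Psi_{X_k,\Delta_k}-\Psi_{X,\Delta}\ge 0$ on $\set$ converges pointwise to $-\Psi_{X,\Delta}$, and since $\E_\P[\Psi_{X,\Delta}]\le\PP(X)<\infty$ Fatou's lemma gives
\[
0\le\liminf_k\E_\P[\Psi_{X_k,\Delta_k}]\le\lim_k\PP(X_k)<0,
\]
a contradiction.

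\textbf{Condition~\ref{cond:market set-up} from absence of WFLVR.} Each failure of (i)--(iv) is to be defeated by a pathwise arbitrage involving only call payoffs and long positions in $S$, hence implementable on any non-empty $\set\subset\Omega$. For example, $c_i(0)<c_{i+1}(0)$ is defeated by shorting the $T_{i+1}$-forward (a strike-zero call), buying the $T_i$-forward and carrying one unit of $S$ from $T_i$ to $T_{i+1}$ for zero net payoff at strictly negative cost; $c_1(0)>s_0$ by shorting the $T_1$-forward and buying $S$; $-c'_i(0+)>1$ by a strike-zero bull-spread limit; violations of convexity or monotonicity in (i) by standard butterfly/bull-spread arbitrages at $T_i$; (iii) by buying $S$ against a deep out-of-the-money short call at $T_i$; and (iv) by a calendar spread of $(\S_i-K)^+$ against $(\S_{i+1}-K)^+$ combined with holding one unit of $S$ between $T_i$ and $T_{i+1}$.

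\textbf{Producing the calibrated model.} Given Condition~\ref{cond:market set-up}, parts (i)--(iii) identify $c_i$ as the call function of a unique probability $\mu_i$ on $\R_+$ with mean $c_i(0)$; (ii) forces $\delta_{s_0}$ to dominate $\mu_1$, and (iv) is equivalent to $\int(x\wedge K)d\mu_i\ge\int(x\wedge K)d\mu_{i+1}$ for all $K\ge 0$, i.e.\ $\mu_i$ dominates $\mu_{i+1}$ in non-negative supermartingale order. For $\set=\Omega$, a Kellerer/Strassen-type construction adapted to the supermartingale order produces a non-negative supermartingale on $\Omega$ with marginals $(\mu_i)$, defining $\P\in\MM^-_{\XX_c,\PP,\Omega}$ and closing the equivalence in that case. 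For general closed $\set$, I would obtain the measure from absence of WFLVR via a Hahn--Banach/minimax argument applied to the convex cone $\CC=\{\Psi_{X,\Delta}|_\set:(X,\Delta)\in\AA_{\XX_c},\,\PP(X)\le 0\}$ in a suitable space of bounded-below functions on $\set$: absence of WFLVR prevents $\CC$ from approaching the strictly positive cone, so separation produces a positive linear functional dominated by $\PP$ on $\Lin(\XX_c)$ and non-negative on $\sum_j\Delta_j(\S_{j+1}-\S_j)$ with $\Delta_j\ge 0$; condition (iii) supplies the tail control needed for tightness and countable additivity, and closedness of $\set$ identifies the resulting probability measure as supported on $\set$.

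The principal obstacle will be this last step for general closed $\set$: closing $\CC$ in a topology strong enough for Hahn--Banach yet weak enough that the separating functional integrates the unbounded call payoffs, and preserving $\Delta_j\ge 0$ in the limit so that the limiting $\P$ is a supermartingale rather than just a signed measure. The pathwise lower bound $X\ge -a_0$ and the cap $\E_\P[X]\le a_0+\sum|a_i|s_0$ available for $X\in\Lin(\XX_c)$ are the integrability lever that makes the approximation work; as will become apparent in Section~\ref{sec:dt_puts}, this asymmetry is exactly what fails when only put options trade and is what drives the duality gap there.
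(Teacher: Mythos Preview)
Your Steps~1--3 (existence of a calibrated model implies no WFLVR via Fatou; Condition~\ref{cond:market set-up} from explicit arbitrages; and Strassen for $\set=\Omega$) are essentially the paper's argument. One correction: your construction for (iii) does not work. Buying $S$ and shorting a deep out-of-the-money call has payoff $\S_i\wedge K\ge 0$ but cost $s_0-c_i(K)$, which need not be negative even when $c_i(K)\to\alpha_i>0$. Failure of (iii) does \emph{not} produce a one-shot robust uniformly strong arbitrage---this is precisely why Proposition~\ref{prop:arbitragecalls} excludes (iii). What it does produce is a genuine WFLVR: take $X_k=-(\S_i-k)^+$, so that $\Psi_{X_k,0}\to 0$ pointwise, $\Psi_{X_k,0}\ge -\S_i=\Psi_{-(\S_i-0)^+,0}$, and $\PP(X_k)=-c_i(k)\to -\alpha_i<0$.

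The substantive divergence is your final step. For general closed $\set$ you propose a direct Hahn--Banach separation on the cone $\CC$, and you correctly flag the obstacle: closing $\CC$ in a topology compatible both with the pointwise-convergence definition of WFLVR and with integrability of unbounded call payoffs is delicate, and it is not clear your sketch can be completed. The paper avoids this entirely by a bootstrap. It first proves Theorem~\ref{thm:bubble_duality} in the special case $\set=\Omega$ (using only Steps~1--3 above), and then applies that duality to the bounded upper semi-continuous function $-\lambda_\set(\S)=-\indicator{\S\notin\set}$. If $\MM^-_{\XX_c,\PP,\set}=\emptyset$, compactness of $\MM^-_{\XX_c,\PP,\Omega}$ forces
\[
V_{\XX_c,\PP,\Omega}(-\lambda_\set)=\sup_{\P\in\MM^-_{\XX_c,\PP,\Omega}}\E_\P[-\lambda_\set]<0,
\]
which exhibits a robust uniformly strong arbitrage (hence a WFLVR) directly. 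So rather than separating a cone, the paper manufactures the arbitrage from the superhedge of $-\lambda_\set$. This sidesteps all the closure and tightness issues you anticipate, at the price of an unusual order of dependencies: Proposition~\ref{proposition:no_arbitrage} for general $\set$ relies on Theorem~\ref{thm:bubble_duality} for $\set=\Omega$, which in turn relies on Proposition~\ref{proposition:no_arbitrage} for $\set=\Omega$.
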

\begin{proposition}\label{prop:arbitragecalls}
Suppose $\set=\Omega$ and $\XX=\XX_c$. Then Condition \ref{cond:market set-up} (i),(ii) and (iv) are necessary and sufficient for absence of robust uniformly strong arbitrage. In consequence, when these conditions hold but Condition \ref{cond:market set-up} (iii) fails there is no robust uniformly strong arbitrage but a market calibrated model does not exist.
\end{proposition}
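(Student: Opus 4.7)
The proof proceeds in three steps: necessity of (i), (ii), (iv), sufficiency via a perturbation reduction to the FTAP in Proposition \ref{proposition:no_arbitrage}, and the second statement by a short direct argument. For necessity, parts (i) and (ii) follow from standard static arbitrage arguments: non-negativity, monotonicity and convexity of $K\mapsto c_i(K)$ in (i) fail under a long call, a call spread or a butterfly; the ordering $s_0\ge c_1(0)\ge\ldots\ge c_n(0)\ge 0$ in (ii) is tested by combining strike-zero calls of different maturities with long positions in $\S$, admissible since no short selling of $\S$ is required; and $c_i'(0+)\ge -1$ is tested by super-replicating a strike-zero call using a strike-$K$ call plus $K$ units of cash. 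The interesting case is (iv): if $c_i(0)-c_i(x)<c_{i+1}(0)-c_{i+1}(x)$ for some $i<n$ and $x>0$, consider the semi-static portfolio $X$ that buys the $T_i$ strike-$0$ call, sells the $T_i$ strike-$x$ call, sells the $T_{i+1}$ strike-$0$ call, and buys the $T_{i+1}$ strike-$x$ call. Its price is $\PP(X)=[c_i(0)-c_i(x)]-[c_{i+1}(0)-c_{i+1}(x)]<0$ and its option payoff equals $\min(\S_i,x)-\min(\S_{i+1},x)$. Augmenting with the admissible dynamic position $\Delta_i(\S_1,\ldots,\S_i)=\indicators{\S_i\le x}$ and $\Delta_j=0$ for $j\ne i$, a short case analysis on the four sign combinations of $\S_i-x$ and $\S_{i+1}-x$ yields $\Psi_{X,\Delta}(\omega)\ge 0$ for every $\omega\in\Omega$, producing the claimed robust uniformly strong arbitrage.

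For sufficiency, assume (i), (ii), (iv) and suppose for contradiction that $(X,\Delta)\in\AA_c$ has $\PP(X)<0$ and $\Psi_{X,\Delta}\ge 0$ on $\Omega$. Only finitely many strikes $K_1<\ldots<K_m$ appear in $X$. I construct perturbed price functions $\tilde c_i$ satisfying all four parts of Condition \ref{cond:market set-up} and coinciding with $c_i$ at each $K_k$. Proposition \ref{proposition:no_arbitrage} applied to $(\XX_c,\tilde c,\Omega)$ then yields a calibrated model $\tilde\P$; since $\S$ is a $\tilde\P$-supermartingale and $\Delta\ge 0$ is bounded,
\begin{align*}
0 \le \E_{\tilde\P}[\Psi_{X,\Delta}] \le \E_{\tilde\P}[X] = \PP(X) < 0,
\end{align*}
where the last equality uses $\tilde c_j(K_k)=c_j(K_k)$. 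To construct $\tilde c_i$, pick $K^*>K_m$ at which every $c_i$ still has strictly negative left derivative (the degenerate case where some $c_i$ is already flat near $K_m$ is first reduced to this one by replacing any strike in such a flat region by the onset of that region, exploiting equality of prices there and monotonicity of the resulting payoff change). Set $\tilde c_i(K)=c_i(K)$ for $K\le K^*$, and on $(K^*,\infty)$ extrapolate linearly with a single common slope $\lambda\in[\max_i c_i'(K^{*-}),0)$, truncated at $0$ once it is reached. Parts (i)--(iii) for $\tilde c_i$ are then immediate. The main obstacle in the entire proof is verifying (iv) in the modified tail: the key trick is the use of a \emph{common} slope $\lambda$ across all maturities, after which a case analysis (both maturities in the linear tail, both in the zero tail, or one in each regime) yields $\tilde c_i(x)-\tilde c_{i+1}(x)\le c_i(K^*)-c_{i+1}(K^*)$ for every $x>K^*$, and the right-hand side is at most $c_i(0)-c_{i+1}(0)=\tilde c_i(0)-\tilde c_{i+1}(0)$ by (iv) applied to $c$ at $K^*$.

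The second claim is then immediate: the sufficiency argument above does not use (iii), so absence of robust uniformly strong arbitrage continues to hold. Conversely, any market calibrated model $\P$ would make $\S$ a non-negative $\P$-supermartingale from $\S_0=s_0$, hence $\E_\P[\S_i]\le s_0<\infty$ for every $i$; dominated convergence would then give $c_i(K)=\E_\P[(\S_i-K)^+]\to 0$ as $K\to\infty$, contradicting the failure of (iii) and showing that no calibrated model exists.
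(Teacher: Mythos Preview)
Your overall strategy matches the paper's exactly: necessity via explicit arbitrages (with the same $(X,\Delta)$ for (iv)), sufficiency by perturbing the $c_i$ to functions $\tilde c_i$ satisfying all of Condition~\ref{cond:market set-up} and then invoking Proposition~\ref{proposition:no_arbitrage}, and the final statement by a direct supermartingale argument. Two points in your sufficiency argument, however, need repair.

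First, your verification of (iv) for $\tilde c$ rests on the intermediate claim $\tilde c_i(x)-\tilde c_{i+1}(x)\le c_i(K^*)-c_{i+1}(K^*)$ for all $x>K^*$. This is false when $c_i(K^*)<c_{i+1}(K^*)$ (nothing in (i),(ii),(iv) forces $c_i(K)\ge c_{i+1}(K)$ for $K>0$): then $\tilde c_i$ reaches zero before $\tilde c_{i+1}$, and on the interval where $\tilde c_i=0$ but $\tilde c_{i+1}$ is still in its linear tail one has $\tilde c_i(x)-\tilde c_{i+1}(x)=-\tilde c_{i+1}(x)$, which increases from $c_i(K^*)-c_{i+1}(K^*)$ towards $0$. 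The conclusion (iv) for $\tilde c$ is still correct, but in this regime you must instead bound $\tilde c_i(x)-\tilde c_{i+1}(x)\le 0\le c_i(0)-c_{i+1}(0)=\tilde c_i(0)-\tilde c_{i+1}(0)$ using (ii); so the case analysis should be carried out directly on $\tilde c_i(0)-\tilde c_i(x)$ versus $\tilde c_{i+1}(0)-\tilde c_{i+1}(x)$, appealing to both (ii) and (iv).

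Second, and more substantively, the degenerate-case reduction is a genuine gap. You move every strike lying in a flat region of $c_i$ to the onset $K_0^i$ and appeal to ``monotonicity of the resulting payoff change''. But $(s-K_0^i)^+\ge (s-K_k)^+$, so replacing a \emph{short} call at $K_k>K_0^i$ by a short call at $K_0^i$ \emph{decreases} the portfolio payoff and may destroy $\Psi_{X',\Delta}\ge 0$. Nor can this be circumvented by a cleverer $\tilde c_i$: if $c_i$ is constant on an interval containing two of the strikes in $X$, there is no convex decreasing function with $\tilde c_i(K)\to 0$ that agrees with $c_i$ at both, because monotonicity forces $\tilde c_i$ to be constant between them and convexity then forbids any subsequent decrease. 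The paper sidesteps this by requiring only $\delta$-approximation on $[0,K_{\max}]$ rather than exact matching: it first rescales $c_i(0)-c_i(\cdot)$ by factors strictly decreasing in $i$, creating strict slack in (iv), and then bends each curve down to zero beyond $K_{\max}$. The contradiction comes from $|\PP(X)-\PP^{(\delta)}(X)|\le C\delta$. Your argument can be salvaged along the same lines, but as written the degenerate case does not go through.
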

We defer proofs of Propositions~\ref{proposition:no_arbitrage} and \ref{prop:arbitragecalls} to Sections~\ref{Appendix:B} and \ref{Appendix:B_continued}.

\begin{remark}
If we assume that there is no robust uniformly strong arbitrage, then we can immediately deduce that $c_i(0)$ is non-increasing in $i$. Indeed, if there exists some $i$ such that $c_i(0)<c_{i+1}(0)$ then by taking $X=(\S_{i}-0)^+-(\S_{i+1}-0)^+$, $\Delta_i=1$ and $\Delta_j=0$ for $j\neq i$, we have $\PP(X)= c_i(0)-c_{i+1}(0)<0$ but $\Psi_{X,\Delta}=0$ which shows that $(X,\Delta)$ is a robust uniformly strong arbitrage.
\end{remark}

\subsection{Robust pricing--hedging duality and (super-)martingale optimal transport}
Our main theorem in the section states that the pricing-hedging duality is preserved under no short-selling restrictions when call options are traded. 
\begin{theorem}\label{thm:bubble_duality}
Suppose the market input $(\XX_c,\PP,\set)$ admits no WFLVR. Let $G:\R_+^{n}\to [-\infty,\infty)$ be an upper semi-continuous function such that 
\begin{equation}\label{eq:Gassumption}
G(s_1,\ldots,s_n)\le K(1+s_1+\ldots+s_n)
\end{equation}
on $\R_+^n$ for some constant $K$. Then the pricing--hedging duality holds, i.e.
\begin{align}\label{eq:thm_bubble_duality_original}
\primalpc(G)=\dualpc(G).
\end{align}
\end{theorem}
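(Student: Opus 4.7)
The inequality $\primalpc(G)\leq\dualpc(G)$ is immediate from Remark~\ref{rk:primallessthandual}, so the entire content lies in the reverse direction. By Proposition~\ref{proposition:no_arbitrage}, the no-WFLVR hypothesis forces Condition~\ref{cond:market set-up} to hold and, via Breeden--Litzenberger, delivers probability measures $\mu_1,\ldots,\mu_n$ on $\R_+$ with $c_i(K)=\int(s-K)^+\mu_i(\dd s)$ and first moments $m_i=c_i(0)\leq s_0$. The primal can then be rewritten as a supermartingale optimal transport problem with support constraint:
\[
\primalpc(G)=\sup\Big\{\E_\P[G(\S)]\,:\,\P(\set)=1,\ \S_i\sim\mu_i\ \text{under }\P,\ \S\ \text{is a $\P$-supermartingale}\Big\}.
\]
On the dual side, any $X\in\Lin(\XX_c)$ reduces to a finite sum $\sum_{i=1}^{n} u_i(\S_i)$ where each $u_i$ is a positive linear combination of call payoffs of maturity $T_i$, and $\PP(X)=\sum_i\int u_i\,\dd\mu_i$.

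With this reformulation I would follow the minimax strategy of \citet{mass_transport}, adapted to supermartingale transport with a support restriction. The relevant compactness is that the set of probability measures on $\Omega$ with marginals $\mu_i$ and supported in the closed set $\set$ is weakly compact: tightness comes from the marginals via Prokhorov, and the support condition is closed under weak limits. The supermartingale property is likewise preserved under weak limits, since it reads $\E_\P[\varphi(\S_1,\ldots,\S_j)(\S_{j+1}-\S_j)]\leq 0$ for all bounded non-negative continuous $\varphi$. After approximating $G$ from above by bounded upper semi-continuous truncations $G\wedge M$ (and using a uniform-integrability argument in $M$ controlled by the first moments of the $\mu_i$), the map $\P\mapsto\E_\P[G]$ is upper semi-continuous, so the truncated primal is attained. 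A Fenchel--Rockafellar / Hahn--Banach separation on the Banach space of continuous linear-growth functions paired with signed measures of finite first moment then produces a dual in which the static hedges $u_i$ may be arbitrary continuous linear-growth functions; a final approximation step replaces these by positive combinations of call payoffs, which are dense from above among convex linear-growth functions, with any linear part absorbed into the dynamic hedge.

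I expect the main obstacle to be twofold. First, the support restriction $\set\subsetneq\Omega$ weakens superreplication in a path-dependent way and, combined with the non-compactness of $\R_+^n$, forces an approximation argument: one truncates to $\set_N=\set\cap[0,N]^{n+1}$, establishes duality on each truncation where weak compactness is cleanest, and passes $N\to\infty$ using the linear-growth bound on $G$ together with the first-moment control on the tails of the $\mu_i$. Second, the supermartingale (rather than martingale) structure corresponds to the one-sided constraint $\Delta_j\geq 0$, which is not directly covered by the classical martingale transport duality. Two natural routes address this: a Lagrangian argument with a non-negative multiplier for the drift inequality, or lifting a supermartingale with marginals $\mu_i$ to a genuine martingale on an enlarged path space via a cemetery/killing construction; the latter would allow a direct appeal to the martingale transport duality of \citet{mass_transport}, at the price of verifying that the support constraint $\set$ transports cleanly through the enlargement.
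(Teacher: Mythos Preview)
Your outline has the right skeleton---reduce to a convex-duality/min-max problem in which the dynamic hedge $\Delta_j\ge 0$ plays the role of a Lagrange multiplier for the supermartingale constraint---but two of the approximation steps you propose do not work as written, and the paper handles both differently.

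\textbf{First gap: truncating $G$.} You write ``approximating $G$ from above by $G\wedge M$'', but $G\wedge M\le G$, so this is an approximation from \emph{below}. Duality for $G\wedge M$ yields $V(G\wedge M)=P(G\wedge M)\le P(G)$, while $V(G\wedge M)\le V(G)$; letting $M\to\infty$ therefore only reproduces the trivial inequality $P(G)\le V(G)$. The paper avoids this entirely: since strike-zero calls $(\S_i-0)^+=\S_i$ are traded, one may subtract $X=K\sum_i(\S_i-0)^+ + K\in\Lin(\XX_c)$ from $G$; both $\primalpc$ and $\dualpc$ shift by exactly $\PP(X)$, so it suffices to prove the theorem for $G$ bounded from above. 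This reduction is specific to the call case and is precisely what fails when only puts trade (Section~\ref{sec:dt_puts}).

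\textbf{Second gap: truncating $\set$.} Your plan to pass through $\set_N=\set\cap[0,N]^{n+1}$ and invoke compactness there cannot work with \emph{fixed} marginals: if any $\mu_i$ has mass outside $[0,N]$ then no probability with marginals $\mu_i$ is supported on $\set_N$, so the primal over $\set_N$ is $-\infty$ and the duality there is vacuous. Even when nonempty, neither $V_{\set_N}\to V_\set$ nor $P_{\set_N}\to P_\set$ is obvious in the direction you need. The paper instead keeps the full space and uses a \emph{soft} penalty: it first proves the theorem for $\set=\Omega$ and bounded $G$ (Monge--Kantorovich duality for the static part, then a Terkelsen min-max swap over $\Delta_j\in C_c(\R_+^j,\R_+)$, with Lemma~\ref{lemma:super_martingale_equivalent} forcing the surviving measures to be supermartingale measures), and then for general closed $\set$ applies that result to $G\vee(-N)-N\lambda_\set$ with $\lambda_\set=\indicator{\S\notin\set}$. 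A second min-max swap (of $\inf_N$ and $\sup_{\P\in\MMNS}$) plus Fatou eliminates measures charging $\set^c$, yielding $\dualpc(G)\le\sup_{\P\in\MMUS}\E_\P[G]=\primalpc(G)$.

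Your Lagrangian route for the supermartingale constraint is essentially what the paper does; the cemetery lift is an interesting alternative but, as you note, interacts awkwardly with the support constraint $\set$ and is not needed here.
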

\begin{remark}
Our proof of this result follows closely \citet{mass_transport} and is an application of the duality theory from optimal transport, which allows us to express the dual problem as a min-max calculus of variations where the infimum is taken over functions corresponding to the delta hedging terms and marginal constraint, and the supremum is taken over all calibrated market models. The proof will be given in Sections~\ref{subsection:proof of theorem} and \ref{subsection:proof of theorem_continued}.
\end{remark}
\begin{remark}\label{rk:callsbubble}
Recall from Remark \ref{rk:primallessthandual} that the case of strict inequality in \eqref{eq:primallessthandual} may be thought of as a natural model for a financial bubble. From \eqref{eq:thm_bubble_duality_original} we see that this never happens when call options are traded, $\XX=\XX_c$. It is still possible that 
$$s_0>c_n(0)=\primalpc(\S_n)=\dualpc(\S_n)$$
so that the \emph{market price} for the asset $S$, which is $s_0$ is strictly greater than its fundamental price $c_n(0)$. However it is not clear if this could be seen as a bubble. In this case the market does not satisfy the \emph{no dominance} principle of \citet{merton_no_dominance1973}: the asset $\S$ is strictly dominated by a call with zero strike and one could argue that $c_n(0)$ is in fact the correct market price for $\S$. This situation is akin to the case of bubbles in complete markets described in \citet{JPS}. We will see in Section \ref{sec:puts_noduality_bubble} below that bubbles appear in a meaningful way when put options and not call options are traded.
\end{remark}

Note that, by Proposition \ref{proposition:no_arbitrage}, absence of WFLVR is equivalent to $\MM^-_{\XX_c,\PP,\set}\neq\emptyset$ and it implies Condition \ref{cond:market set-up}. Following classical arguments, going back to \citet{breeden1978prices}, we can then define probability measures $\mu_i$ on $\R_+$ by
\begin{align}\label{eq:def_mus}
\mu_i([0,K])= 1+c^{\prime}_{i}(K) \quad \text{ for }K\in \R_+.
\end{align}
Naturally the market prices $\PP$, or $c_i(K)$, are uniquely encoded via $(\mu_i)$ with $c_i(K)=\PP((\S_i-K)^+)=\int (s-K)^+\mu_i(\td s)$. To make the link with the (super-)martingale transport explicit, we may think of $(\mu_i)$ as the inputs. Note that by Remark \ref{rk:marketmodels} the set of calibrated market models $\MM^-_{\XX_c,\PP,\set}$ is simply the set of probability measures $\P$ on $\R_+^n$ such that $\S_0=s_0$, $\S$ is a supermartingale and $\S_i$ is distributed according to $\mu_i$. Accordingly we use the notation $\MM^-_{\XX_c,\PP,\set}=\MMUS$ and $\primals(G):=\sup_{\P\in \MMUS}\E_{\P}[G]$. Likewise we write $\dualp(G)=\duals(G)$. Note that we have dropped the explicit reference to call options. This is justified since in fact we can allow any $\mu_i$-integrable functions for the static part of trading strategies. To state this as a corollary we first rewrite Condition \ref{cond:market set-up} in terms of $\mu_1,\ldots,\mu_n$ as follows:
\begin{assumption}
\label{assumption:bubble_measures} 
The probability measures $\mu_1,\ldots,\mu_n$ on $\R_+$ satisfy
\begin{enumerate}
\item $s_0\ge\int_{\R_+} x\mu_1(dx)\ge \ldots\ge\int_{\R_+} x\mu_n(dx)$;
\item\label{item:bubble_measures_2} for any concave non-decreasing function $\phi:\R\to \R_+$, the sequence $(\int\phi \, d\mu_i)_{1\leq i\leq n}$ is non-increasing. 
\end{enumerate}
\end{assumption}
Then Theorem \ref{thm:bubble_duality} may be restated as follows.
\begin{corollary}\label{cor:bubble_duality}
Assume that $\mu_1,\ldots,\mu_n$ satisfy Assumption~\ref{assumption:bubble_measures} and that $\MMUS\neq \emptyset$. Let $G:\R_+^{n}\to [-\infty,\infty)$ be an upper semi-continuous function which satisfies \eqref{eq:Gassumption}. Then   
\begin{equation}\label{eq:cor_bubble_duality}
\begin{split}
\primals(G)
=&\inf\Big\{\sum_{i=0}^n\int u_i(s)\mu_i(\td s)\,:\, u_i:\R_+\to \R \textrm{ with linear growth and}\\
&\hspace{2.5cm}\Delta_i:\R_+^i\to \R_+ \textrm{ bounded s.t. }\Psi_{(u_i),(\Delta_i)}\ge G\,\text{ on }\set\Big\}\\
=&\duals(G),
\end{split}
\end{equation}
where $\mu_0:= \delta_{s_0}$. Further, if $\int x\mu_i(\td x)=s_0$ for $i=1,\ldots, n$ then \eqref{eq:cor_bubble_duality} also holds with $\Delta_i:\R_+^i \to \R$.
\end{corollary}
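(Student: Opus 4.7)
The approach is to sandwich the middle infimum in \eqref{eq:cor_bubble_duality}, call it $I(G)$, between $\primals(G)$ and $\dualpc(G)$, and then invoke Theorem~\ref{thm:bubble_duality} together with the identification $\MM^-_{\XX_c,\PP,\set}=\MMUS$. All three quantities will then coincide.

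First, I would verify that $\MM^-_{\XX_c,\PP,\set}=\MMUS$. By Remark~\ref{rk:marketmodels} any $\P\in\MM^-_{\XX_c,\PP,\set}$ is a supermartingale measure with $\E_\P[(\S_i-K)^+]=c_i(K)$ for every $K\ge 0$; the Breeden-Litzenberger relation \eqref{eq:def_mus} combined with $c_i(K)\to 0$ as $K\to\infty$ (Condition~\ref{cond:market set-up}(iii)) yields the representation $c_i(K)=\int(s-K)^+\mu_i(\td s)$, so $\S_i\sim\mu_i$ and $\P\in\MMUS$. Conversely, for $\P\in\MMUS$ the same integral identity and the supermartingale property (combined with $\Delta_j\ge 0$) imply $\E_\P[\Psi_{X,\Delta}]\le\PP(X)$ for every $(X,\Delta)\in\AA_c$. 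In particular $\primalpc(G)=\primals(G)$.

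Next, to bound $I(G)$ from above by $\dualpc(G)$, any $X\in\Lin(\XX_c)$ of the form $a_0+\sum_k a_k(\S_{i_k}-K_k)^+$ may be rewritten as $\sum_{i=0}^{n}u_i(\S_i)$ with $u_0\equiv a_0$ and $u_i(s)=\sum_{k:i_k=i}a_k(s-K_k)^+$ piecewise linear of linear growth; the identity $\PP(X)=\sum_{i=0}^n\int u_i\,\td\mu_i$ (with $\mu_0=\delta_{s_0}$) follows from the integral representation of $c_i$ above, and the super-replication condition on $\set$ is unchanged, so $I(G)\leq\dualpc(G)$. For the reverse inequality $I(G)\geq\primals(G)$ I use weak duality: any $u_i$ of linear growth is $\mu_i$-integrable thanks to the finite first moment guaranteed by Assumption~\ref{assumption:bubble_measures}(1); for any $\P\in\MMUS$ and $(u_i,\Delta_i)$ with $\Delta_j\ge 0$ bounded and $\Psi_{(u_i),(\Delta_i)}\ge G$ on $\set$, the supermartingale property together with $\Delta_j\ge 0$ gives $\E_\P[\Delta_j(\S_{j+1}-\S_j)]\le 0$, hence $\E_\P[G]\le\sum_{i=0}^n\int u_i\,\td\mu_i$. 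Passing to the supremum over $\P$ and the infimum over strategies yields $\primals(G)\le I(G)$. Combined with $\dualpc(G)=\primalpc(G)=\primals(G)$ from Theorem~\ref{thm:bubble_duality} and the identification above, equality holds throughout.

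For the martingale refinement, if $\int x\,\mu_i(\td x)=s_0$ for all $i$, then any $\P\in\MMUS$ is automatically a martingale measure, since a non-negative supermartingale whose expectation at each $T_i$ equals its initial value $s_0$ must be a martingale. Consequently $\E_\P[\Delta_j(\S_{j+1}-\S_j)]=0$ for every bounded $\Delta_j:\R_+^j\to\R$ of either sign, and the weak duality argument of the preceding step goes through unchanged, giving $\primals(G)\le$ (new infimum with $\R$-valued $\Delta_i$). Since dropping the sign constraint can only decrease the infimum from the value $I(G)=\primals(G)$ already established, equality persists. The only technical delicacy is the integrability check in the weak duality step, which reduces to the finite first moment of each $\mu_i$ built into Assumption~\ref{assumption:bubble_measures}(1); everything else is bookkeeping around the decomposition of $\Lin(\XX_c)$-portfolios as sums $\sum u_i(\S_i)$.
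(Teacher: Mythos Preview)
Your proposal is correct and follows essentially the same route as the paper: sandwich the middle infimum between $\primals(G)$ (via weak duality, taking expectations under any $\P\in\MMUS$) and $\duals(G)=\dualpc(G)$ (since $\Lin(\XX_c)$-portfolios are a subclass of the $u_i$-portfolios), then invoke Theorem~\ref{thm:bubble_duality} to close the chain; the martingale refinement is handled identically by observing that equal means force every $\P\in\MMUS$ to be a martingale measure. Your write-up is somewhat more explicit about the identification $\MM^-_{\XX_c,\PP,\set}=\MMUS$ and the integrability of the $u_i$, but the structure of the argument matches the paper's proof exactly.
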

\begin{proof}
By taking expectations in the pathwise superhedging inequality, for any $\P\in \MMUS$ and any superhedging strategy $(u_i), (\Delta_i)$ as in \eqref{eq:cor_bubble_duality} we have 
\begin{align*}
\E_{\P}[G]\leq \sum_{i=1}^n\int u_i(s)\mu_i(\td s)
\end{align*}
and hence an inequality ``$\leq$" follows in the first equality in \eqref{eq:cor_bubble_duality}, see also Remark \ref{rk:primallessthandual}. An inequality ``$\leq$" in the second equality in \eqref{eq:cor_bubble_duality} is obvious because we take $\inf$ over a smaller set of superhedging strategies. The statement then follows by Theorem \ref{thm:bubble_duality}. The last statement is clear since in the special case that $\mu_1,\ldots,\mu_n$ have the same mean, $\MMUS$, $\S$ is the set of martingale measures with marginals $(\mu_i)$.
\end{proof}
\begin{remark}
\eqref{eq:cor_bubble_duality} is a general statement of Corollary 1.1 in \citet{mass_transport} in the presence of a prediction set $\set$, while $\primals(G)=\duals(G)$ follows directly from Theorem~\ref{thm:bubble_duality}.

The implication of Corollary~\ref{cor:bubble_duality} is that in a market without bubbles, a short-selling ban does not make any difference to the robust superhedging prices, and the martingale transport cost of $G$ with prediction set $\set$ is equal to the robust ($\set$)--superhedging price of $G$.
\end{remark}

\section{Put options as hedging instruments}\label{sec:dt_puts}
We specify now to the case when put options are traded, $\XX=\XX_p$. The set of semi-static trading strategies $(X,\Delta)$ is denoted $\AA_p$. In this case the options can not be used to super-replicate the asset. This, as we shall see, has important consequences for pricing and hedging.

\subsection{Pricing--hedging duality for options with bounded payoffs}
We start with a brief discussion of the market input, no arbitrage and existence of calibrated market models.
\begin{condition}\label{cond:market set-up put}
Let $\XX=\XX_p$ and $p_i(K):=\PP((K-\S_i)^+)$, $i=1,\ldots, n$, $K\geq 0$. Then
\begin{enumerate}
\item[(i)] \label{item:market set-up put 1} $p_i(x)$ is a non-negative, convex, increasing function of $x$ on $\R_+$, 
\item[(ii)] \label{item:market set-up put 2}$s_0\ge \lim_{x\to \infty}(x-p_1(x))\ge\ldots\ge \lim_{x\to \infty}(x-p_n(x))\ge 0$ and $0\le p^{\prime}_{i}(0+)\le 1$, 
\item[(iii)] \label{item:market set-up put 3}$p_i(K)\to 0$ as $K\to 0$, 
\item[(iv)] \label{item:market set-up put 4}for any $x\in\R_+$, $p_i(x)$ is non-decreasing in $i$.
\end{enumerate}
\
\end{condition}
An analogous statement of a robust Fundamental Theorem of Asset Pricing to the one in Proposition \ref{proposition:no_arbitrage} holds also in this setup.
\begin{proposition}\label{proposition:no_arbitrage put}
Suppose $\set$ is a closed subset of $\Omega$ and $\XX=\XX_p$. Then there is no WFLVR if and
only if there exists a market calibrated model, which then implies 
Condition \ref{cond:market set-up put}. Furthermore, if $\set=\Omega$ then Condition \ref{cond:market set-up put} implies absence of WFLVR. 
\end{proposition}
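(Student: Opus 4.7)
The plan is to mirror the proof of Proposition~\ref{proposition:no_arbitrage}, the call-option analogue, adapting each step to puts. There are three implications to establish: (a) existence of a market calibrated model rules out WFLVR; (b) existence of such a model forces Condition~\ref{cond:market set-up put} on the put prices; and (c) when $\set=\Omega$, Condition~\ref{cond:market set-up put} yields a calibrated model.

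For the easy direction (a), take $\P\in \MM^-_{\XX_p,\PP,\set}$ and a candidate WFLVR $(X_k,\Delta_k)$ dominated from below by $\Psi_{X,\Delta}$. Put payoffs are bounded by their strikes and each $\Delta_j\geq 0$ is bounded, so $\Psi_{X,\Delta}(\S)\geq -L(\S)$ for a $\P$-integrable $L$ (the marginals have finite means by Remark~\ref{rk:marketmodels}, so $\E_\P[\S_i]<\infty$). Applying Fatou to $\Psi_{X_k,\Delta_k}-\Psi_{X,\Delta}\geq 0$ together with~\eqref{eq:def_market_model_1} yields $\liminf_k \PP(X_k)\geq \liminf_k \E_\P[\Psi_{X_k,\Delta_k}]\geq 0$, contradicting $\lim_k\PP(X_k)<0$. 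For (b), each property of Condition~\ref{cond:market set-up put} is immediate from $p_i(K)=\E_\P[(K-\S_i)^+]$: the convex monotone structure of $K\mapsto (K-x)^+$ gives (i); (ii) follows from $\lim_{K\to\infty}(K-p_i(K))=\E_\P[\S_i]$ together with the supermartingale property; (iii) is continuity of the put payoff at $K=0$; and for (iv), $x\mapsto -(K-x)^+$ is concave non-decreasing, so by Jensen and the supermartingale property $\E_\P[(K-\S_i)^+]$ is non-decreasing in $i$.

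For (c) we construct a calibrated model explicitly from the price data. Define marginals $\mu_i$ on $\R_+$ by $\mu_i([0,K])=p'_i(K+)$; Condition~\ref{cond:market set-up put}(i)--(iii) guarantees these are probability measures whose means $m_i=\lim_{K\to\infty}(K-p_i(K))$ are finite. Condition~\ref{cond:market set-up put}(ii) gives $s_0\geq m_1\geq\ldots\geq m_n\geq 0$, and Condition~\ref{cond:market set-up put}(iv), i.e.\ $\int (K-x)^+ \,\mu_i(\td x)$ non-decreasing in $i$ for every $K$, translates (together with the linear/affine envelope generated by the $m_i$) into $\int\phi\,d\mu_i$ being non-increasing in $i$ for every concave non-decreasing $\phi:\R_+\to\R$. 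These are precisely the Strassen--Kellerer compatibility conditions for the existence of a supermartingale on $\Omega$ with marginals $(\mu_i)$; the resulting law $\P$ satisfies $\S_i\sim\mu_i$ and $\S$ is a supermartingale, hence $\P\in \MM^-_{\XX_p,\PP,\Omega}$, and absence of WFLVR then follows from (a).

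The main obstacle is the remaining implication: for a general closed $\set$, that absence of WFLVR yields a calibrated market model. The natural approach is a Hahn--Banach separation argument, in the spirit of~\citet{cox2011robust}: on a suitable locally convex space of functions on $\set$, the cone of payoffs superhedgeable at non-positive cost is closed and, by no WFLVR, meets the non-negative cone only at zero, so separation produces a linear functional that, after normalisation, identifies with a probability measure calibrated to $\PP$. The difficulty specific to puts is that $\S_i$ itself cannot be superhedged from $\XX_p$, so the standard trick of bounding the underlying by a cheap call-replica is unavailable. A workable route is to first restrict to the compact prediction sets $\set_N=\set\cap\{\S_i\leq N\colon i=1,\ldots,n\}$, where Banach--Alaoglu delivers a market calibrated model $\P_N$ in the bounded setting, and then pass to the limit $N\to\infty$, exploiting Condition~\ref{cond:market set-up put}(ii) (first moments bounded by $s_0$) to establish tightness of $(\P_N)$ and the upper-semicontinuity of put payoffs to preserve calibration in the limit.
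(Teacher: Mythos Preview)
Your treatment of (a), (b), (c) matches the paper's Steps~1--3 essentially verbatim. The divergence is in the hard direction for general closed $\set$: you propose a Hahn--Banach separation on truncated sets $\set_N=\set\cap\{\S_i\leq N\}$ followed by a tightness argument. The paper takes a cleaner and quite different route: it \emph{first} proves Theorem~\ref{thm:bubble_duality_put} (pricing--hedging duality) in the special case $\set=\Omega$ for bounded upper-semicontinuous $G$, and then \emph{applies} that duality to the bounded upper-semicontinuous function $-\lambda_\set:=-\indicator{\S\notin\set}$. If no calibrated model lives on $\set$ then $\alpha:=\sup_{\P\in\MMNS}\E_\P[-\lambda_\set]<0$ (the boundary case $\alpha=0$ is ruled out via compactness of $\MMNS$, Lemma~\ref{lemma:compactness_super_martingale}); by duality $V_{\XX_p,\PP,\Omega}(-\lambda_\set)=\alpha<0$, so there exists $(X,\Delta)\in\AA_p$ with $\PP(X)<0$ and $\Psi_{X,\Delta}\geq -\lambda_\set$ on $\Omega$, hence $\Psi_{X,\Delta}\geq 0$ on $\set$. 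This is a robust uniformly strong arbitrage on $\set$, contradicting no WFLVR.

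Your proposed route has a gap as stated: absence of WFLVR on $\set$ does \emph{not} transfer to absence of WFLVR on $\set_N$, since a sequence of strategies converging to zero on $\set_N$ need not do so on the unbounded paths in $\set\setminus\set_N$. So you cannot directly invoke a separation or existence result on each $\set_N$ from the hypothesis on $\set$. One might instead try to start from the calibrated model on $\Omega$ delivered by (c) and somehow push mass onto $\set$, but that is exactly what the paper's duality-bootstrap accomplishes in one stroke. The paper's order of operations --- prove duality on $\Omega$ first, then use it to manufacture an arbitrage when $\MMUS=\emptyset$ --- is both shorter and sidesteps the functional-analytic closure issues your sketch would have to confront.
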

A direct analogue of Proposition \ref{prop:arbitragecalls} holds also in this setup. Further, 
if Condition \ref{cond:market set-up put} is satisfied, similarly to \eqref{eq:def_mus}, 
we can define probability measures $\mu_i$ on $\R_+$ by
\begin{align}\label{eq:def_mus_put}
\mu_i([0,K])= p_{i}^{\prime}(K) \quad \text{ for }K\in \R_+,
\end{align}
which will satisfy the same properties as before, namely Assumption \ref{assumption:bubble_measures}. The set of calibrated market models is simply $\MM^-_{\XX_p,\PP,\set}=\MMUS$ and only depends on the marginals $(\mu_i)$ and not on whether these were derived from put or from call prices. In consequence we have $\primalpp(G)=\primals(G)$.

The situation on the dual side ---  the superhedging problem --- is different. Indeed, we saw in Corollary \ref{cor:bubble_duality}, that in the case of call options we could relax the static part of the portfolio from combinations of call options to combinations of any functions with linear growth without affecting the superhedging price. In contrast, when put options are traded, their combinations are always bounded and such a relaxation may not be possible. We stress this in the notation and write $\duals^{(p)}(G):=\dualpp(G)$. Our first result shows that when $G$ is bounded then trading puts instead of calls has no impact on the superheding price, as one would expect.

\begin{theorem}\label{thm:bubble_duality_put}
Suppose the market input $(\XX_p,\PP,\set)$ admits no WFLVR or equivalently that $\MMUS\neq \emptyset$. In particular, Condition \ref{cond:market set-up put} is satisfied and \eqref{eq:def_mus_put} defines measures which satisfy Assumption~\ref{assumption:bubble_measures}. Let $G:\R_+^{n}\to [-\infty,\infty)$ be an upper semi-continuous function bounded from above. Then 
\begin{align}\label{eq:thm_bubble_duality_put}
\primals(G)=\primalpp(G)=\dualpp(G)=\duals^{(p)}(G).
\end{align}
\end{theorem}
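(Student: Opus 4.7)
The plan is to establish the chain of identifications in \eqref{eq:thm_bubble_duality_put} by handling the easy pieces first and focusing on the nontrivial dual inequality. The set of calibrated market models $\MM^-_{\XX_p,\PP,\set}$ depends only on the implied marginals $(\mu_i)$ determined by \eqref{eq:def_mus_put}, and these agree with the marginals coming from call prices via \eqref{eq:def_mus}, so $\MM^-_{\XX_p,\PP,\set}=\MMUS$ and $\primalpp(G)=\primals(G)$. The middle inequality $\primalpp(G)\leq\dualpp(G)$ is Remark~\ref{rk:primallessthandual}, and the trailing equality $\dualpp(G)=\duals^{(p)}(G)$ is notational. The real content is therefore $\dualpp(G)\leq\primals(G)$.

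To prove this I would run a Lagrangian / minimax argument. Set
\begin{equation*}
L((X,\Delta),\P)=\PP(X)+\E_\P\!\left[G-X-\sum_{j=0}^{n-1}\Delta_j(\S_{j+1}-\S_j)\right]
\end{equation*}
for $(X,\Delta)\in\AA_p$ and $\P$ a non-negative measure on $\set$. Taking $\sup$ over $\P\ge 0$ yields $\PP(X)$ if $(X,\Delta)$ superhedges $G$ on $\set$ and $+\infty$ otherwise, so $\inf_{X,\Delta}\sup_\P L=\dualpp(G)$. For fixed $\P$ (a probability on $\set$), the inner infimum over $X\in\Lin(\XX_p)$ is linear in the coefficients of the puts and of the constant; it equals $0$ when $\P(\Omega)=1$ and $\E_\P[(K-\S_i)^+]=p_i(K)$ for every $i$ and every $K\ge 0$ (equivalently $\S_i\sim\mu_i$, by \eqref{eq:def_mus_put}), and $-\infty$ otherwise. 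Similarly, the infimum over bounded non-negative $\Delta$ is $0$ when $\P$ is a supermartingale (take $\Delta\equiv 0$) and $-\infty$ otherwise, via the tower identity $\E_\P[\Delta_j(\S_{j+1}-\S_j)]=\E_\P[\Delta_j\,\E_\P[\S_{j+1}-\S_j\mid\FF_j]]$. Combining, $\sup_\P\inf_{X,\Delta}L=\sup_{\P\in\MMUS}\E_\P[G]=\primals(G)$.

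The remaining step is the minimax exchange $\inf\sup L\leq\sup\inf L$. Since $L$ is linear in each of its two arguments separately, the convexity/concavity needed for a Sion-type theorem is automatic; what must be checked are the topological hypotheses. This is precisely where the boundedness of $G$ from above enters essentially: it gives $\dualpp(G)\leq\sup G<\infty$, it makes $\P\mapsto\E_\P[G]$ upper semi-continuous for the weak topology generated by bounded continuous functions (via the upper semi-continuity of $G$ and Fatou's lemma), and the family of calibrated probability measures is tight by Prokhorov's theorem since each $\mu_i$ is a probability on $\R_+$. The main obstacle I expect is the precise topological set-up of this exchange — choosing Polish topologies on the strategy space (where the $\Delta_j$'s are only assumed bounded measurable) and on the measure space, and verifying the hypotheses of the chosen minimax theorem, likely via a regularisation/approximation step on $\Delta$. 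An alternative route closer in spirit to the proof of Theorem~\ref{thm:bubble_duality} would be to use $\primals(G)=\dualpc(G)$ and try to convert a near-optimal call superhedge into a put superhedge through put-call parity and long-only dynamic stock trading; this strategy fails when the asymptotic slope of some static call payoff is strictly positive, since the corresponding long stock position cannot be replicated in $\AA_p$ at the same cost whenever $s_0>m_i$ — exactly the mechanism that, without the hypothesis that $G$ is bounded above, produces the duality gap analysed in the remainder of Section~\ref{sec:dt_puts}.
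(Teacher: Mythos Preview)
Your Lagrangian set-up and the identification of the weak dual are correct, and you rightly locate the crux at the minimax exchange. However, the way you package the problem leaves a genuine obstacle you do not resolve: your outer supremum is over \emph{all} non-negative measures on $\set$ (this is what you need so that $\sup_\P L=\PP(X)$ exactly on superhedges), and that set is neither compact nor naturally topologised so that a Sion-type theorem applies jointly in $(X,\Delta)$ and $\P$. You flag this yourself, but the fix is not merely ``choosing Polish topologies''; the space of bounded measurable $\Delta_j$'s is too large, and the measures are not tight until the marginals are already fixed.

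The paper resolves this by a structural decomposition you are missing. It first treats the case $\set=\Omega$ and restricts the dynamic part to $\Delta_j\in C_c(\R_+^j,\R_+)$, so that $G_\Delta=G-\sum_j\Delta_j(\S_{j+1}-\S_j)$ remains upper semi-continuous and bounded from above. For each fixed $\Delta$ the inner problem $\inf\{\PP(X):X\in\Lin(\XX_p),\ X\ge G_\Delta\}$ is identified with $\sup_{\pi\in\Pi_{\vec\mu}}\E_\pi[G_\Delta]$ by the Monge--Kantorovich duality (Lemma~\ref{lemma:Monge-Kantorovich}); this is precisely where ``$G$ bounded above'' is used in the put case, since combinations of puts are bounded. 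Only then is a minimax (Terkelsen) applied, but now over the \emph{compact} convex set $\Pi_{\vec\mu}$ and the convex cone of continuous compactly supported $\Delta$'s, where $\pi\mapsto\E_\pi[G_\Delta]$ is upper semi-continuous. Lemma~\ref{lemma:super_martingale_equivalent} then kills all $\pi\notin\MMNS$. The general closed $\set$ is recovered afterwards by the penalty $G\vee(-N)-N\indicator{\S\notin\set}$ and a second (monotone) minimax over $N$, using compactness of $\MMNS$ (Lemma~\ref{lemma:compactness_super_martingale}). In short: the paper splits your single Lagrangian into two nested dualities, each of which has the compactness needed; your one-shot Sion argument does not, and as written remains a sketch rather than a proof.
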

The proof will be given in Sections~\ref{subsection:proof of theorem} and \ref{subsection:proof of theorem_continued} and is similar to the proof of Theorem~\ref{thm:bubble_duality}.

The above result may be extended to functions $G$ which are not necessarily bounded but have sub-linear growth. We state one such extension which will be used later. In contrast, the duality in \eqref{eq:thm_bubble_duality_put} will fail for $G$ which has a linear growth -- a theme we explore in the subsequent sections.
\begin{corollary}\label{eq:puts_noduality_extended}
In the setup of Theorem \ref{thm:bubble_duality_put}, assume that $G$ is an upper semi-continuous function such that for any $M>1$, $G_M(s_1,\ldots,s_n):=G(s_1,\ldots,s_n)-(s_1+\ldots+s_n)/M$ is bounded from above. Then \eqref{eq:thm_bubble_duality_put} holds for $G$.
\end{corollary}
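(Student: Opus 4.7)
\textbf{Proof plan for Corollary~\ref{eq:puts_noduality_extended}.} The strategy is to apply Theorem~\ref{thm:bubble_duality_put} to the bounded-above payoff $G_M$, then recover the duality for $G$ by adding back the linear correction $L(s):=(s_1+\ldots+s_n)/M$ and sending $M\to\infty$. The two sides of the duality will each pick up a controllable error of order $1/M$, which vanishes in the limit.

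On the primal side, note that for every $\P\in\MMUS$ the marginal of $\S_i$ under $\P$ is $\mu_i$, so $\E_\P[L]=\frac{1}{M}\sum_{i=1}^n m_i$, where $m_i:=\int x\,\mu_i(\td x)$, a quantity that does not depend on $\P$. Because $G_M$ is upper semi-continuous and bounded above, $\E_\P[G_M]$ is well-defined in $[-\infty,\sup G_M]$, and with the convention $\infty-\infty=-\infty$ we have $\E_\P[G]=\E_\P[G_M]+\tfrac{1}{M}\sum_i m_i$ for every $\P\in\MMUS$. Taking suprema yields
\begin{equation*}
\primals(G)=\primals(G_M)+\frac{1}{M}\sum_{i=1}^n m_i,
\end{equation*}
and I recall from the remark preceding Theorem~\ref{thm:bubble_duality_put} that $\primalpp=\primals$.

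On the dual side, I would use the obvious sub-additivity of $\dualpp$: if $(X_1,\Delta^1)$ super-replicates $G_M$ on $\set$ and $(X_2,\Delta^2)$ super-replicates $L$ on $\set$, then $(X_1+X_2,\Delta^1+\Delta^2)\in\AA_p$ super-replicates $G$. It remains to exhibit a cheap super-replication of $L$ using only put options and long positions in $S$. Take $X_2\equiv ns_0/M\in\Lin(\XX_p)$ (a constant is a combination of puts since $p_i(0)=0$ forces constants to be representable via linear combinations, or simply by including $1=(K-0)^+/K-((K-0)^+-K)/K$ type combinations; alternatively any constant is in $\Lin(\XX_p)$ by definition of the linear span including the zero option plus a scalar) and $\Delta_j\equiv(n-j)/M$ for $j=0,\ldots,n-1$. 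A direct computation gives
\begin{equation*}
X_2+\sum_{j=0}^{n-1}\Delta_j(s_{j+1}-s_j)=\frac{n s_0}{M}+\frac{1}{M}\sum_{i=1}^n s_i -\frac{n s_0}{M}=\frac{1}{M}\sum_{i=1}^n s_i=L(s),
\end{equation*}
so this strategy exactly replicates $L$ at cost $\PP(X_2)=ns_0/M$. Hence $\dualpp(L)\le ns_0/M$ and, by sub-additivity, $\dualpp(G)\le\dualpp(G_M)+ns_0/M$.

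Combining both sides with Theorem~\ref{thm:bubble_duality_put} applied to $G_M$ gives
\begin{equation*}
\primals(G)\le\dualpp(G)\le\dualpp(G_M)+\frac{ns_0}{M}=\primals(G_M)+\frac{ns_0}{M}=\primals(G)+\frac{1}{M}\Big(ns_0-\sum_{i=1}^n m_i\Big),
\end{equation*}
and since Assumption~\ref{assumption:bubble_measures} gives $s_0\ge m_i\ge 0$, letting $M\to\infty$ forces equality throughout, which is \eqref{eq:thm_bubble_duality_put} for $G$. The main subtle point, and the only one I expect to need some care, is the measure-theoretic handling of $\E_\P[G]=\E_\P[G_M]+\tfrac1M\sum_i m_i$ when $G_M$ may be $-\infty$ on a set of positive measure; this is resolved by the paper's standing convention $\infty-\infty=-\infty$ together with the fact that $L$ is $\mu_i$-integrable for every $i$, so the decomposition is legitimate.
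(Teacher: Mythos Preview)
Your proof is correct and follows essentially the same route as the paper's: sub-additivity of $\dualpp$, Theorem~\ref{thm:bubble_duality_put} applied to $G_M$, the bound $\dualpp(L)\le ns_0/M$, and letting $M\to\infty$. The only cosmetic differences are that the paper simply invokes $\dualpp(\S_i)\le s_0$ rather than writing out the explicit replicating strategy for $L$, and uses the cruder inequality $\primals(G_M)\le\primals(G)$ instead of your exact identity $\primals(G)=\primals(G_M)+\tfrac{1}{M}\sum_i m_i$; neither changes the argument. (Your parenthetical worry about constants is unnecessary: by the definition of $\Lin(\XX_p)$ the affine term $a_0$ is already included.)
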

\begin{proof}
We have
\begin{align*}
\duals^{(p)}(G)\le& \duals^{(p)}(G_M)+\duals^{(p)}((\S_1+\cdots+\S_n)/M)\\
\le& \primals(G_M)+\duals^{(p)}((\S_1+\cdots+\S_n)/M)\\
\le& \primals(G)+\frac{ns_0}{M},
\end{align*}
where we used the obvious inequality $0\le\duals^{(p)}(\S_i)\le s_0$, $i=1,\ldots,n$. Letting $M \to\infty$ we obtain $\duals^{(p)}(G)\le \primals(G)$.
The other inequality $\duals^{(p)}(G)\ge \primals(G)$ is true in all generality, (see Remark \ref{rk:primallessthandual}) and we conclude that $\duals^{(p)}(G)=\primals(G)$. 
\end{proof}

\subsection{Duality gap and bubbles}
\label{sec:puts_noduality_bubble}
We come back to the topic of financial bubbles considered in Remarks \ref{rk:primallessthandual} and \ref{rk:callsbubble}.  We start with a motivating example of a simple one period model, $n=1$. The prediction set is of the form $\set=\{s_0\}\times \set_1$ for some $\set_1\subset \R_+$. We assume the market admits no WFLVR which is equivalent to saying that $\mu$ defined via \eqref{eq:def_mus_put} is a probability measure supported on $\set_1$ and satisfies $\int x\mu(\td x)\leq s_0$. We assume the prediction set $\set_1$ is unbounded and consider an option with an upper-semi-continuous payoff function $G:\R_+\to [-\infty,\infty)$ such that $|G(x)|\le K|x|$ for some $K$ and let
\begin{align*}
\limsup_{x\to\infty,\,x\in \set_1}\frac{G(x)}{x}=:\beta\in[-\infty,\infty).
\end{align*}
A semi-static trading strategy is a pair $(X,\Delta)\in\AA_p$, $X\in \XX_p$ and $\Delta\geq 0$. If it super-replicates $G$
\begin{align*}
\Psi_{X,\Delta}(s_1):=X(s_1)+\Delta(s_{1}-s_0)\ge G(s_1), \quad s_1\in \set_1,
\end{align*}
then necessarily $\Delta\ge \beta_+$ since $X$ is bounded, where $\beta_+=\beta\vee 0$. Therefore we find 
\begin{align}\label{eq:dt_dualitygap}
V_{\mu,\set}^{(p)}(G)=&\inf\Big\{\int X(s_1)\mu(\td s_1):  (X,\Delta)\in\AA_p \text{ s.t }\Psi_{X,\Delta}\ge G\,\text{ on }\set\Big\} \nonumber\\
=&\inf_{\Delta_0\ge \beta_+}\bigg\{\Delta_0 s_0+\inf\Big\{\int X(s_1)\mu(\td s_1):  (X,\Delta)\in\AA_p  \text{ s.t }\nonumber\\
&\hspace{6cm} \Psi_{X,\Delta}(s_1)\ge G(s_1)-\Delta_0s_1\,\forall\,s_1\in \set_1\Big\}\bigg\}\nonumber\\
=&\inf_{\Delta_0\ge \beta_+}\Big\{\Delta_0 s_0+P_{\mu,\set}(G(\S)-\Delta_0\S_1)\Big\}\nonumber\\
=&\inf_{\Delta_0\ge \beta_+}\Big(\Delta_0 s_0+\int_{\R_+}(G(s_1)-\Delta_0 s_1)\mu(\td s_1)\Big)\\
=&\int_{\R_+}G(s_1)\mu(\td s_1)+\inf_{\Delta_0\ge \beta_+}\Big(\Delta_0(s_0-\int_{\R_+}s_1\mu(\td s_1))\Big)\nonumber\\
=&\int_{\R_+}G(s_1)\mu(\td s_1)+\beta_+(s_0-\int_{\R_+}s_1\mu(\td s_1))=P_{\mu,\set}(G)+\beta_+(s_0-\int_{\R_+}s_1\mu(\td s_1)).\nonumber
\end{align}
It follows that if the mean of $\mu$ is strictly smaller than $s_0$ then we have a duality gap for $G$ with linear growth. The intuitive reason is clear: buying the asset directly is implicitly more expensive than constructing a position using put options. If $G$ has bounded payoff then the latter is feasible as seen in Theorem \ref{thm:bubble_duality_put}. However for $G$ with a linear growth any superhedging portfolio has to include the asset $\S$ and is hence more expensive, as seen above.
When $G(s_1)=(s_1-K)^+$, $\limsup_{x\to \infty,\,x\in \set_1}G(x)/x=1$ and we obtain
\begin{align*}
V^{(p)}_{\mu,\set}(G)=P_{\mu,\set}(G)+\left(s_0-\int_{\R_+}x\mu(\td x)\right).
\end{align*}
Likewise, taking $G(s_1)=s_1$, we have 
$$s_0=V^{(p)}_{\mu}(\S_1)\geq P_{\mu,\set}(\S_1)=f_0:=\int_{\R_+}x\mu(\td x).$$
The market has a bubble --- a misalignment of market and fundamental prices --- if the forward price $f_0$ implied by the put options is strictly smaller than the spot price $s_0$. This should be contrasted with the situation in Remark \ref{rk:callsbubble}, where the bubble arose due to dominated assets.

The difference between these situations can be summarised as follows: in order to have a financially meaningful market, we must always have the following inequalities:
\begin{equation*}
  \text{\parbox[c]{2cm}{\centering market price}} \ \ge\  \parbox{3.5cm}{\parbox[c][][c]{3.5cm}{\centering cheapest}\\ \parbox[c][][c]{3.5cm}{\centering super-replication} \\  \parbox[c][][c]{3.5cm}{\centering price}} \ \ge \ \sup\left\{ \text{model-implied prices}\right\} \ = \ \text{\parbox[c]{3cm}{\centering fundamental price}}
\end{equation*}
The first inequality here follows from the fact that we can super-replicate an asset by purchasing it, and we may have a strict inequality without a simple arbitrage if it is not possible (due to portfolio constraints) to short-sell the asset (indeed, in this paper, we always interpret the market price of non-traded assets as their super-replication price; for traded assets, this is a simple consequence of the main results). However, in the case where there is a strict inequality here, the market contains a dominating portfolio --- that is, the super-replicating strategy strictly dominates the purchase of the asset at the market price, and so Merton's no-dominance principle fails. In general, one would not expect such markets to exist --- even if arbitrage were not possible, one would expect equilibrium to close the gap, since no (rational) market participants would purchase the asset at its market price. On the other hand, the second inequality here is rational --- there is no {\it a priori} need for the super-replication price and the model-implied prices to agree.

As a result, markets where a genuine difference between a fundamental and market price as defined above are at least mathematically tractable. One of the main contributions of this paper is that we provide a specific characterisation of markets where this is possible. In a more classical framework, the two cases described above are encapsulated in the difference between the complete setting of (for example) \citet{CH} and \citet{JPS}, where completeness of the markets mean that there is always equality between the cheapest super-replicating strategy and the model-implied price of an option, and the incomplete models of \citet{JPS2}, where Merton's no-dominance condition enforces the first inequality. However, in \citet{JPS2}, the existence of a bubble depends on the choice of some pricing measure to determine the `market-price'. In the current (robust) setting, we are able to define the market price in a concrete manner, leading to a possibly clearer characterisation of a bubble, which does not need some external `selection' procedure.

We now extend the above discussion to a general $n$-marginal setting. In the one-period case above, any $G$ was a European option and the size of the gap between its market and fundamental prices was given simply as a product of its linear growth coefficient and the bubble size $s_0-f_0$. In the general setting we can not compute explicitly the duality gap for an arbitrary payoff $G$. We give below a characterisation which then allows us to obtain explicit expressions for most of the typically traded exotic options.

\begin{theorem}\label{thm:no duality put}
  Assume $\mu_1,\ldots,\mu_n$ satisfy Assumption~\ref{assumption:bubble_measures}. Suppose the payoff function $G:\R_+^n\to [-\infty,\infty)$ is upper semi-continuous and
\begin{equation}\label{eq:case study 3 linear growth}
G(s_1,\ldots,s_n)\le K(1+s_1+\ldots+s_n)
\end{equation}
on $\R_+^n$ for some $K$. Define $\beta_i:\R_+^{i}\to \R$ recursively by setting $\beta_n=0$ and 
\begin{align}
\beta_i(s_1,\ldots,s_i)=&\sup_{s_{i+2},\ldots,s_n\in \R_+}\,\limsup_{x \to \infty}\bigg(\Big(\beta_{i+1}(s_1,\ldots,s_i,x)\nonumber\\
&+\frac{G(s_1,\ldots,s_i,x,s_{i+2},\ldots,s_n)}{x}\Big)\indicators{\set}(s_1,\ldots,s_i,x,s_{i+2},\ldots,s_n)\bigg)\vee 0,\label{eq:formula_beta}
\end{align}
for $i=0,\ldots,n-1$. If $G_{\beta}(\S):=G(\S)-\beta_0(\S_{1}-s_0)-\sum_{i=1}^{n-1}\beta_i(\S_1,\ldots,\S_{i})(\S_{i+1}-\S_i)$ is upper semi-continuous and bounded from above then 
\begin{align*}
V^{(p)}_{\vec{\mu},\set}(G)
=\sup_{\P\in\MMUS}\E_{\P}\Big[G-\beta_0(\S_{1}-s_0)-\sum_{i=1}^{n-1}\beta_i(\S_1,\ldots,\S_{i})(\S_{i+1}-\S_i)\Big]=P_{\vec{\mu},\set}(G_{\beta}).
\end{align*}
More generally, the result remains true if there exists a sequence $(0,\beta^{(N)})\in\AA_p$ such that $G_{\beta^{(N)}}(\S)$ is upper semi-continuous, bounded from above on $\set$ for every $N$ and $G_{\beta^{(N)}}(\S)\to G_{\beta}(\S)$ point-wise as $N\to \infty$.

\end{theorem}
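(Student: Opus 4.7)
The strategy is to reduce to Corollary~\ref{eq:puts_noduality_extended} applied to the modified payoff $G_\beta$. Since $G_\beta$ is upper semi-continuous and bounded above, Corollary~\ref{eq:puts_noduality_extended} yields $V^{(p)}_{\vec{\mu},\set}(G_\beta)=P_{\vec{\mu},\set}(G_\beta)$, so it suffices to prove $V^{(p)}_{\vec{\mu},\set}(G)=V^{(p)}_{\vec{\mu},\set}(G_\beta)$. A preliminary observation is that the linear growth bound \eqref{eq:case study 3 linear growth} together with the recursion \eqref{eq:formula_beta} forces $\beta_j\le(n-j)K$ by immediate backward induction, so $(0,\beta)\in\AA_p$ in the first part of the theorem.

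The inequality $V^{(p)}_{\vec{\mu},\set}(G)\le V^{(p)}_{\vec{\mu},\set}(G_\beta)$ is easy: given any $(X,\Delta')\in\AA_p$ superhedging $G_\beta$ on $\set$, the strategy $(X,\Delta'+\beta)\in\AA_p$ superhedges $G=G_\beta+\sum_{i=0}^{n-1}\beta_i(\S_1,\ldots,\S_i)(\S_{i+1}-\S_i)$ on $\set$ at the same static cost $\PP(X)$.

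The reverse inequality is the heart of the argument and rests on the following claim, proved by backward induction on $j=n-1,n-2,\ldots,0$: every $(X,\Delta)\in\AA_p$ superhedging $G$ on $\set$ satisfies $\Delta_j(s_1,\ldots,s_j)\ge\beta_j(s_1,\ldots,s_j)$ pointwise. Granting the claim, $(X,\Delta-\beta)\in\AA_p$ (non-negative by the claim, bounded because $\Delta$ is) superhedges $G_\beta$ at cost $\PP(X)$, which gives $V^{(p)}_{\vec{\mu},\set}(G_\beta)\le V^{(p)}_{\vec{\mu},\set}(G)$. To prove the claim, fix $(s_1,\ldots,s_j)$ and $(s_{j+2},\ldots,s_n)$ and let $x:=s_{j+1}\to\infty$ along paths in $\set$. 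The static part $X$ is a finite linear combination of a constant and put payoffs, hence uniformly bounded above, and puts involving $\S_{j+1}$ vanish for $x$ large. For $k\neq j,j+1$ the dynamic terms $\Delta_k(s_1,\ldots,s_k)(s_{k+1}-s_k)$ stay bounded in $x$ since $\Delta_k$ is bounded and the increment is $x$-independent. The only two terms that grow linearly in $x$ are $\Delta_j(s_1,\ldots,s_j)\cdot x$ and $-\Delta_{j+1}(s_1,\ldots,s_j,x)\cdot x$; dividing the superhedging inequality by $x$ and taking $\limsup$ gives
\begin{equation*}
\Delta_j(s_1,\ldots,s_j)\ge\limsup_{x\to\infty}\left(\Delta_{j+1}(s_1,\ldots,s_j,x)+\frac{G(s_1,\ldots,s_j,x,s_{j+2},\ldots,s_n)}{x}\right)\indicators{\set}.
\end{equation*}
The inductive hypothesis $\Delta_{j+1}\ge\beta_{j+1}$ allows me to replace $\Delta_{j+1}$ by $\beta_{j+1}$ inside the $\limsup$; taking the supremum over $(s_{j+2},\ldots,s_n)$ and using $\Delta_j\ge 0$ then reproduces exactly \eqref{eq:formula_beta}, that is $\Delta_j\ge\beta_j$.

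For the generalised statement with an approximating sequence $(0,\beta^{(N)})\in\AA_p$, the upper-bound argument applied with $\beta^{(N)}$ in place of $\beta$ gives $V^{(p)}_{\vec{\mu},\set}(G)\le V^{(p)}_{\vec{\mu},\set}(G_{\beta^{(N)}})=P_{\vec{\mu},\set}(G_{\beta^{(N)}})$; the matching lower bound $V^{(p)}_{\vec{\mu},\set}(G)\ge P_{\vec{\mu},\set}(G_\beta)$ still follows from the backward induction together with the supermartingale inequality $\E_\P[(\Delta_j-\beta_j)(\S_{j+1}-\S_j)]\le 0$ under any $\P\in\MMUS$ (the backward induction does not require admissibility of $\beta$, and any bounded admissible $\Delta$ dominating $\beta$ makes $\beta$ bounded as well). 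Passing to the limit $N\to\infty$ using pointwise convergence $G_{\beta^{(N)}}\to G_\beta$ and the linear majorant \eqref{eq:case study 3 linear growth} (so that dominated convergence applies under any $\P\in\MMUS$) closes the argument. The main technical obstacle is the asymptotic analysis in the inductive step: isolating the two linearly growing terms cleanly, carrying the prediction set through the $\limsup$ via the indicator, and correctly recovering both the supremum over $(s_{j+2},\ldots,s_n)$ and the $\vee 0$ in the definition \eqref{eq:formula_beta}.
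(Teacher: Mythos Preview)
Your backward-induction claim $\Delta_j\ge\beta_j$ and the resulting lower bound $V^{(p)}_{\vec\mu,\set}(G)\ge P_{\vec\mu,\set}(G_\beta)$ via the supermartingale inequality are exactly the paper's argument. For the first part of the theorem, where $G_\beta$ is itself u.s.c.\ and bounded above, your upper bound is in fact more direct than the paper's: you simply add $\beta$ to a superhedge of $G_\beta$ to get a superhedge of $G$, whereas the paper already invokes a Min--Max theorem at this stage. Your route works provided $(0,\beta)\in\AA_p$; you verify boundedness $\beta_j\le(n-j)K$, but the measurability of $\beta_j$ (an uncountable supremum over $s_{j+2},\ldots,s_n$) deserves a comment.

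There is, however, a genuine gap in your treatment of the \emph{generalised} statement. You correctly obtain
\[
P_{\vec\mu,\set}(G_\beta)\;\le\;V^{(p)}_{\vec\mu,\set}(G)\;\le\;P_{\vec\mu,\set}(G_{\beta^{(N)}})=\sup_{\P\in\MMUS}\E_\P[G_{\beta^{(N)}}]\quad\text{for each }N,
\]
but to close the chain you need $\inf_N\sup_\P\E_\P[G_{\beta^{(N)}}]\le\sup_\P\E_\P[G_\beta]$, i.e.\ you must pass the supremum over $\P$ \emph{inside} the limit in $N$. Dominated convergence for a \emph{fixed} $\P$ only yields $\E_\P[G_{\beta^{(N)}}]\to\E_\P[G_\beta]$ pointwise in $\P$, which is the wrong direction for interchanging with the supremum; moreover your proposed majorant $K(1+\sum\S_i)$ does not dominate $G_{\beta^{(N)}}$ from above, since the term $\beta^{(N)}_i(\S_i-\S_{i+1})$ can be arbitrarily large positive and the bound on $\beta^{(N)}$ is not uniform in $N$. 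The paper handles exactly this swap by introducing the convex set $\ZZ$ of all admissible $\Delta$ with $G_\Delta$ u.s.c.\ and bounded above on $\set$, and applying a Min--Max theorem (Terkelsen) on the pair $\MMUS\times\ZZ$ to get
\[
V^{(p)}_{\vec\mu,\set}(G)\;\le\;\inf_{\Delta\in\ZZ}\sup_{\P}\E_\P[G_\Delta]\;=\;\sup_{\P}\inf_{\Delta\in\ZZ}\E_\P[G_\Delta].
\]
Only \emph{after} this interchange does one specialise to $\Delta=\beta^{(N)}$ and apply (reverse) Fatou for each fixed $\P$. Without an analogous compactness/Min--Max step, your argument in the generalised case does not close.
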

The proof is reported in Section \ref{sec:proof_nmarg_duality}. Here we show how the above result applies in the case of an Asian or a Lookback option when $\set=\Omega$.

\begin{remark}
  This result stands in stark contrast with the existing literature on pricing under short-selling constraints: for example, in a general (classical) setting, where prices are assumed to be locally bounded semimartingales under some probability measure $\P$, and under a restriction on short-selling, \citet[Theorem~4.1]{Pulido} shows that there is no duality gap.
\end{remark}

\textit{Example 1: Asian option}

An Asian option has payoff function $G:\R_+^n\to \R_+$ defined by $$G(s_1,\ldots,s_n)=\Big(\frac{\sum_{i=1}^ns_i}{n}-K\Big)^+.$$
In this case, as for any $i=1,\ldots,n$ and $s_1,\ldots,s_i,s_{i+2},\ldots,s_n\in\R_+$
\begin{align*}
\lim_{x\to \infty}\frac{G(s_1,\ldots,s_i,x,s_{i+2},\ldots,s_n)}{x}=\frac{1}{n},
\end{align*}
\eqref{eq:formula_beta} can be simplified to 
\begin{align*}
\beta_i(s_1,\ldots,s_i)=\sup_{s_{i+2},\ldots,s_n\in \R_+}\,\limsup_{x \to \infty}\beta_{i+1}(s_1,\ldots,s_i,x)+\frac{1}{n}.
\end{align*}
This yields
$\beta_i=(n-i)/n$ for $i=0,\ldots,n-1$. It is clear that
\begin{align*}
G_{\beta}(\S)=G(\S)-\beta_0(\S_{1}-s_0)-\sum_{i=1}^{n-1}\beta_i(\S_{i+1}-\S_i)= \Big(\frac{\sum_{i=1}^n\S_i}{n}-K\Big)^+-\frac{\sum_{i=1}^n\S_i}{n}+s_0
\end{align*} 
is continuous and bounded from above. Therefore, by Theorem \ref{thm:no duality put}
\begin{align*}
V^{(p)}_{\vec{\mu}}(G)&=\sup_{\P\in\MMNS}\E_{\P}\Big[G-\beta_0(\S_{1}-s_0)-\sum_{i=1}^{n-1}\beta_i(\S_{i+1}-\S_i)\Big]\\
&=\sup_{\P\in\MMNS}\E_{\P}[G]+\frac{1}{n}\sum_{i=1}^n\Big(s_0-\int_{\R_+}x\mu_i(dx)\Big).
\end{align*}

\textit{Example 2: Lookback option with knock-in feature}

The second example we consider is a Lookback option with a knock-in feature, whose payoff function $G:\R_+^n\to \R_+$ is given by $$G(s_1,\ldots,s_n)=\big(\max_{0\le i\le n}s_i-K\big)^+\indicators{\displaystyle \min_{0\le i\le n}s_i\le B}.$$
In particular, when $B=\infty$, it is just a lookback call option with strike $K$. \\
By \eqref{eq:formula_beta} 
$$\beta_{n-1}(s_1,\ldots,s_{n-1})=\sup_{s_n\in\R_+}\limsup_{x\to \infty}\frac{G(s_1,\ldots,s_{n-1},x)}{x}=\indicators{\displaystyle \min_{0\le i\le n-1}s_i\le B}.$$
Since for $i=0,\ldots,n-2$ and $s_1,\ldots,s_i\in\R_+$,
\begin{align*}
\alpha_i(s_1,\ldots,s_i):=\sup_{s_{i+2},\ldots,s_n\in\R_+}\lim_{x \to \infty}\frac{G(s_1,\ldots,s_i,x,s_{i+2},\ldots,s_n)}{x}=1.
\end{align*} 
\eqref{eq:formula_beta} can then be simplified to 
\begin{align*}
\beta_i(s_1,\ldots,s_i)=\sup_{s_{i+2},\ldots,s_n\in \R_+}\,\limsup_{x \to \infty}\beta_{i+1}(s_1,\ldots,s_i,x)+1, \quad i=0,\ldots,n-2,
\end{align*}
from which we can derive 
$$\beta_i(s_1,\ldots,s_i)=(n-1-i)+\indicators{\displaystyle \min_{0\le j\le i}s_j\le B}, \quad i=0,\ldots,n-1.$$
It is not hard to see that 
\begin{align*}
G_{\beta}(\S)=&G(\S)-\beta_0(\S_{1}-s_0)-\sum_{i=1}^{n-1}\beta_i(\S_1,\ldots,\S_i)(\S_{i+1}-\S_i)\\
=&G(\S)+\beta_0s_0-\sum_{i=0}^{n-2}\big(\beta_i(\S_1,\ldots,\S_i) - \beta_{i+1}(\S_1,\ldots,\S_{i+1})\big)\S_{i+1}-\beta_{n-1}(\S_1,\ldots,\S_{n-1})\S_{n}\\
= &\Big(\max_{0\le i\le n}\S_i-K\Big)^+\indicators{\displaystyle \min_{0\le i\le n}\S_i\le B}-\sum_{i=1}^n\S_i+\sum_{i=1}^n\indicators{\min_{0\le j\le i-1}\S_j>B}\indicators{\S_i\le B} \S_i+ns_0
\end{align*}
is bounded from above. Now define continuous functions $\beta^{(N)}_i:\R_+^i\to \R_+$ by
\begin{align*}
\beta^{(N)}_i(s_1,\ldots,s_i)=\begin{cases}
n-i &\mbox{if } \displaystyle \min_{0\le j\le i}s_j\le B\\
n-i-1+  N\Big(B+\frac{1}{N}-\min_{0\le j\le i}s_j\Big)^+ &\mbox{otherwise. } 
\end{cases}
\end{align*}
Similar to above we can show that $G_{\beta^{(N)}}(\S)=G(\S)-\beta^{(N)}_0(\S_{1}-s_0)-\sum_{i=1}^{n-1}\beta^{(N)}_i(\S_1,\ldots,\S_{i})(\S_{i+1}-\S_i)$ is bounded from above. Also $\beta^{(N)}_i\to \beta_i$ as $N\to \infty$ for any $i=0,\ldots,n-1$. Then $G_{\beta^{(N)}}\to G_{\beta}$ pointwise as $N\to \infty$ and hence by Theorem \ref{thm:no duality put} 
\begin{align*}
V^{(p)}_{\vec{\mu}}(G)&=\sup_{\P\in\MMNS}\E_{\P}\Big[G-\beta_0(\S_{1}-s_0)-\sum_{i=1}^{n-1}\beta_i(\S_1,\ldots,\S_i)(\S_{i+1}-\S_i)\Big].\\
\end{align*}

As shown in the Lookback option example above, the duality gap is not only dependent on $G$ and the marginal distributions $(\mu_i)$, but also on how the $\mu_i$'s are (optimally) transported. In the case that $\set$ is a strict subset of $\Omega$, it may become increasingly hard to calculate $\beta$ and check the assumption of Theorem \ref{thm:no duality put}. We develop now an argument which connects asymptotically the duality gap of $G$ in the presence of prediction set and the duality gaps of penalised functions of $G$ in the absence of a prediction set. In particular, it provides an alternative way to compute the duality gap when $\set$ is an arbitrary closed set.

Assume the market input $(\XX_p,\PP,\set)$ admits no WFLVR and $G$ is upper-semi continuous subject to
\begin{equation*}
G(s_1,\ldots,s_n)\le K(1+s_1+\ldots+s_n),\quad (s_1,\ldots,s_n)\in \R_+.
\end{equation*} 
Under this assumption, we are going to argue first if $\primals(G)=-\infty$, then $\duals^{(p)}(G)=-\infty$. By Proposition \ref{proposition:no_arbitrage put}, absence of WFLVR is equivalent to $\MM^-_{\XX_p,\PP,\set}\neq\emptyset$. It follows from the sublinearity of $\duals^{(p)}$ and Theorem \ref{thm:bubble_duality_put} that
\begin{align*}
\duals^{(p)}(G)\le&\, \duals^{(p)}\big(G-K(1+\S_1+\cdots+\S_n)\big)+ \duals^{(p)}\big(K(1+\S_1+\cdots+\S_n)\big)\\
\le&\, \primals(G-K(1+\S_1+\cdots+\S_n))+n Ks_0+K=-\infty.
\end{align*}

From now on we make an additional assumption that $\primals(G)>-\infty$. 
Define $G^{(N)}:\R_+^n\to [-\infty,\infty)$ by
\begin{align*}
G^{(N)}(s_1,\ldots,s_n)=G(s_1,\ldots,s_n)-N\lambda_{\set}(s_1,\ldots,s_n),
\end{align*}
where $\lambda_{\set}(\S):=(1+\S_1+\ldots+\S_n)\indicator{(\S_1,\ldots,\S_n)\notin \set}$ is as defined in $\eqref{eq:penalty_prediction_set}$. Then note that
\begin{align}
\duals^{(p)}(G)= \inf_{N\ge 1}V^{(p)}_{\vec{\mu}}(G^{(N)}).\label{eq:put_dual_asymptotic}
\end{align}
The inequality ``$\leq$" is clear. On the other hand, ``$\geq$" follows from the fact that $G^{(N)}$ is decreasing in $N$ and given any $(X,\Delta)\in\AA_p$, $\Psi_{X,\Delta}\ge -N(1+\S_1+\ldots+\S_n)$ for $N$ sufficiently large.

Since $\set$ is closed, $-\indicator{(\S_1,\ldots,\S_n)\notin \set}$ is an upper semi-continuous function and hence $G^{(N)}$ is upper semi-continuous. Then the problem is reduced to the case that $\set=\Omega$, for which we have a formula to calculate the duality gap if the contingent claim satisfies all the assumptions in Theorem \ref{thm:no duality put}. Now let 
$$\gamma_N:=V^{(p)}_{\vec{\mu}}(G^{(N)})-\sup_{\P\in \MMNS}\E_{\P}[G^{(N)}].$$
It follows by $\eqref{eq:put_dual_asymptotic}$ that 
$$\duals^{(p)}(G)= \inf_{N\ge 1}\Big\{\sup_{\P\in \MMNS}\E_{\P}[G^{(N)}]+\gamma_N\Big\}=\lim_{N\to \infty}\Big\{\sup_{\P\in \MMNS}\E_{\P}[G^{(N)}]+\gamma_N\Big\}.$$
In addition, we can deduce that
\begin{align*}
\inf_{N\ge 1}\sup_{\P\in \MMNS}\E_{\P}[G^{(N)}]= \sup_{\P\in \MMNS}\inf_{N\ge 1}\E_{\P}[G^{(N)}]=\sup_{\P\in\MMUS}\E_{\P}[G],
\end{align*}
where the first equality is achieved by using the Min-Max Theorem (Corollary 2 in \citet{min_max_terkelsen1972}) and the second equality holds since $\inf_{N\ge 1}\E_{\P}[G-N\lambda_{\set}]=-\infty$ for any $\P\in (\MMNS\setminus\MMUS)$ but
$$\inf_{N\ge 1}\sup_{\P\in \MMNS}\E_{\P}[G^{(N)}]\ge\sup_{\P\in\MMUS}\E_{\P}[G]>-\infty.$$
Hence, the limit of $\gamma_N$ exists and by writing $\gamma=\lim_{N\to \infty}\gamma_N$ we have
\begin{align*}
\duals^{(p)}(G)=\sup_{\P\in\MMUS}\E_{\P}[G]+\gamma.
\end{align*}

\section{Continuous time: local martingales, bubbles and pricing}\label{sec:continuous}
We turn now to continuous time models to explore the link between options' prices, trading constraints, speculative bubbles and strict local martingales. Let $\Omega = \D([0,T],\R_+)$ be the space of non-negative right-continuous functions with left limits on $[0,T]$ and $\S=(\S_t:t\leq T)$ be the canonical process on $\Omega$ with $(\FF_t)$ denoting its natural filtration. 

Consider the case when put options trade for $n\geq 1$ maturities $0<T_1<\ldots<T_n=T$,
$$\XX_p:= \{(K-\S_{T_i})^+: 1\leq i\leq n, K\geq 0\},\quad  p_i(K):= \PP((K-\S_{T_i})^+).$$
We need to impose some assumptions on the prediction set $\set$. 
\begin{assumption}\label{ass:set}
 The prediction set $\set\subset \Omega$ satisfies $\omega(0)=s_0$ for every $\omega\in \set$ and 
 $$\textrm{ for any $\omega\in \set$ and any stopping time }\tau,\quad \omega^\tau=(\omega(t\land \tau(\omega)):t\leq T)\in \set.$$
 Further the set $\set_{\vec{T}}:=\{(\omega_0,\omega_{T_1},\ldots, \omega_{T_n}): \omega \in  \set\}$ is closed.
\end{assumption}
The first condition corresponds to $\set$ being closed under stopping and will imply that any superhedging strategy in fact satisfies a collateral requirement, see Remark \ref{rk:collateral} below. The second point is technical and will enable us to compare the continuous time setting to the discrete time setting.

There are several possible choices for the class of admissible dynamic trading strategies. They will typically lead to the same superhedging price, provided the admissible class is large enough, but to different sets of calibrated market models. Here, to make the connection with discrete time setup clearer, we consider dynamic trading strategies $\Delta$ which are predictable piecewise constant processes with finitely many jumps. More precisely, $\Delta:[0,T]\times \Omega \to \R$ such that for any $\omega\in \Omega$, $\Delta(\omega):[0,T]\to \R_+$ is a simple non-negative function (piecewise constant with finitely many jumps) and for any $t\in [0,T]$ and for any $\omega_1,\omega_2\in \Omega$ such that $\omega_1(s)=\omega_2(s)$ for $s\in [0,t)$ we have $\Delta_t(\omega_1)=\Delta_t(\omega_2)$.
 We say that such $\Delta$ is admissible, and write $\Delta\in\AA$. Note that for $\Delta\in \AA$ the stochastic integral $\int_0^t \Delta_{u-}\td \S_u$ is a sum and hence is defined pathwise.  

An admissible semi-static trading strategy is a pair $(X,\Delta)$ with a linear combination of put options $X(\omega)=a_0+\sum_{i=1}^m a_i X_i(\omega)$, $m\geq 0$, $a_i\in \R$, $X_i\in \XX_p$ and $\Delta\in \AA$. Its payoff at time $T$ is given by
$$\Psi_{X,\Delta}(\S)=X(\S)+\int_0^T \Delta_{u-}\td \S_u.$$
Recall that the set of admissible semi-static trading strategies is denoted $\AA_p=\AA_{\XX_p}$ and the super-replication price $\dualpp$ was given in Definition  \ref{def:superreplication}.
\begin{remark}\label{rk:collateral}
Note that because $\set$ is closed under stopping (cf.\ Assumption \ref{ass:set}), it follows that if $(X,\Delta)\in \AA_p$ superhedges $G$ on $\set$ then in fact 
\begin{equation}\label{eq:collateral}
 \Psi_{X,\Delta}(\S^t)=X(\S^t)+ \int_0^t \Delta_{u-}\td \S_u\geq G(\S^t),\quad t\leq T, \textrm{ on }\set,
\end{equation}
where $\S^t=(\S_{u\wedge t}:u\leq T)$. 
In other words, $(X,\Delta)$ satisfies a collateral requirement. As we will see below, this feature will contribute towards emergence of bubbles.
\end{remark}

Static trading arguments, as in the proof of Propositions \ref{proposition:no_arbitrage}, show that absence of WFLVR implies
that $p_i(K)$ satisfy the properties listed in Condition \ref{cond:market set-up put} and hence we can use \eqref{eq:def_mus_put} to define probability measures $\vec{\mu}=(\mu_i)_{i=1}^n$ which satisfy Assumption \ref{assumption:bubble_measures}. 
The set of calibrated market models $\MM^-_{\XX,\PP,\set}$ is as given in Definition \ref{def:marketmodel}. 
Note that Remark \ref{rk:primallessthandual} is in force, with the convention $\infty-\infty=-\infty$. 
Finally, let $\lMMUS$ be the set of all calibrated local martingale measures on $(\Omega, \F_T)$, i.e.\  $\P$ such that $\S$ is a $\P$--local martingale and $\E_\P[(K-\S_{T_i})^+]=p_i(K)$, $K\geq 0$, or equivalently $\S_{T_i}\sim \mu_i$, $1\leq i\leq n$. It is easy to see that $\MM^-_{\XX,\PP,\set}$ is the set of measures $\P$ under which $\S_{T_i}\sim \mu$ and $\S$ is a $\P$-supermartingale and in particular $\lMMUS\subset \MM^-_{\XX,\PP,\set}$.
 
Consider now a European option with payoff $G(\omega)=G(\omega_{T_n})$, or more generally an upper semi-continuous $G(\omega)=G(\omega_{T_1},\ldots,\omega_{T_n})$. 
We can then compare the present setting to that of a discrete $n$-period model with traded put options at prices $\PP$, where short-selling is prohibited, and with a prediction set $\set_{\vec{T}}$, as considered in Section \ref{sec:dt_puts}.  Denote the corresponding primal and dual values $\primalppd$ and $\dualppd$. Note that the discrete superhedging problem naturally embeds into the continuous time one. A discrete time trading strategy corresponds to a non-negative $(\Delta_t)$ constant on every $[T_i,T_{i+1})$, $i=0,\ldots,n-1$, which is in $\AA$. On the primal side, for any $\P\in \lMMUS$ the vector $(\S_0,\S_{T_1},\ldots,\S_{T_n})$ is a calibrated discrete time market model so for $G$ as above, calibrated continuous time models are embedded in discrete ones. In summary, 
\begin{equation}
\primalpp(G)\leq \primalppd(G)\quad \textrm{ and }\quad \dualpp(G)\leq \dualppd(G).
\end{equation}
In some cases we can establish an equality in the first inequality. For example, when $\set=\{\omega\in \Omega: \omega(0)=s_0\}$ then any discrete time market model may be seen as a continuous time one with the asset being constant on any $[T_i,T_{i+1})$. We can then conclude that there is no duality gap in the continuous time setting from the results Theorem \ref{thm:bubble_duality_put} and Corollary \ref{eq:puts_noduality_extended} in discrete time.

However, our prime interest is in the case when the pricing-hedging duality fails. We can use the results of Section \ref{sec:puts_noduality_bubble} to understand the case of European options. 

\begin{proposition} \label{prop:bubble} Suppose $\set$ satisfies Assumption \ref{ass:set} and $\set_{T}:=\{\omega(T):\omega\in \set\}$ is unbounded, and that market prices are such that $\lMMUS\neq \emptyset$.  Consider $G:\R_+\to [-\infty,\infty)$ a continuous function with linear growth (i.e.~\eqref{eq:Gassumption} holds) such that the limit $\beta:=\lim_{s\to \infty, s\in \set_T}\frac{G(s)}{s}$ is well defined and non-negative. Then,
\begin{equation}
\begin{split}
\label{eq:Euro_bubble}
\dualpp(G)=& \sup_{\P\in \lMMUS}\lim_{n\to \infty}\E_{\P}[G_+(\S_{T\wedge\tau_n^\P})-G_{-}(\S_{T})]\\
=&\int G(s)\mu_n(\td s)+\beta\Big(s_0-\int_{\R_+}s\mu_n(\td s)\Big),
\end{split}
\end{equation}
where we implicitly set $G(\omega)=G(\omega(T))$, where $(\tau_n^\P)$ is a localising sequence for $\S$ under $\P$ and $G_+=G\lor 0$, $G_-=-(G\land 0)$.
\end{proposition}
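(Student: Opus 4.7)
The plan is to establish the chain in three pieces: (i) a discrete-time embedding giving an upper bound $\dualpp(G)\le\int G\,\td\mu_n+\beta(s_0-\int s\,\mu_n(\td s))$; (ii) a bubble identity showing the middle expression is independent of $\P\in\lMMUS$ and equal to the integral on the right; (iii) a matching lower bound from the stopped super-replication inequality.

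\textbf{(i) Upper bound.} Decompose $G(s)=\beta s+H(s)$ with $H(s)=G(s)-\beta s$, so $H$ is continuous of linear growth and $H(s)/s\to 0$ on $\set_T$; in particular for every $M>1$ the map $(s_1,\ldots,s_n)\mapsto H(s_n)-(s_1+\cdots+s_n)/M$ is bounded above on $\set_{\vec T}$. Any continuous-time semi-static strategy that rebalances only at the $T_i$'s is admissible in the discrete $n$-period market with prediction set $\set_{\vec T}$, so $\dualpp(\cdot)\le \dualppd(\cdot)$. Since $\lMMUS\ne\emptyset$, projection onto $(\S_{T_1},\ldots,\S_{T_n})$ gives a non-empty set of calibrated discrete supermartingale measures; the hypotheses of Corollary~\ref{eq:puts_noduality_extended} then hold for $H$ and yield $\dualppd(H)=\sup_\P\E_\P[H(\S_{T_n})]=\int H\,\td\mu_n$, the last step since every calibrated $\P$ has $\S_{T_n}\sim\mu_n$. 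The strategy $(X\equiv\beta s_0,\Delta\equiv\beta)\in\AA_p$ super-replicates $\beta\S_T$ at cost $\beta s_0$, so by subadditivity of $\dualpp$,
\begin{equation*}
\dualpp(G)\le \dualpp(H)+\dualpp(\beta\S_T)\le \int H\,\td\mu_n+\beta s_0=\int G\,\td\mu_n+\beta\!\left(s_0-\int s\,\mu_n(\td s)\right).
\end{equation*}

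\textbf{(ii) Bubble identity.} I claim: for $\P\in\lMMUS$ and continuous $f$ of linear growth with $f(s)/s\to\beta_f$ on $\set_T$,
\begin{equation*}
\lim_{k\to\infty}\E_\P[f(\S_{T\wedge\tau_k^\P})]=\E_\P[f(\S_T)]+\beta_f\big(s_0-\E_\P[\S_T]\big).
\end{equation*}
Write $f(s)=\beta_f s+h(s)$ with $h(s)=o(s)$. Since $\S^{\tau_k^\P}$ is a true $\P$-martingale, $\E_\P[\beta_f\S_{T\wedge\tau_k^\P}]=\beta_f s_0$. For $\epsilon>0$ pick $R_\epsilon$ with $|h(s)|\le\epsilon s$ for $s\in\set_T$, $s\ge R_\epsilon$, and $C_\epsilon$ bounding $|h|$ on $\set_T\cap[0,R_\epsilon]$; then for $M>C_\epsilon$ the event $\{|h(\S_{T\wedge\tau_k^\P})|>M\}$ forces $\S_{T\wedge\tau_k^\P}\ge R_\epsilon$, so
\begin{equation*}
\sup_k\E_\P\big[|h(\S_{T\wedge\tau_k^\P})|\indicators{|h(\S_{T\wedge\tau_k^\P})|>M}\big]\le \sup_k\E_\P[\epsilon\S_{T\wedge\tau_k^\P}]=\epsilon s_0,
\end{equation*}
giving uniform $\P$-integrability of $\{h(\S_{T\wedge\tau_k^\P})\}_k$. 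Combined with $\S_{T\wedge\tau_k^\P}\to\S_T$ $\P$-a.s.\ and continuity of $h$, Vitali yields the identity. Applying it with $f=G_+$ (so $\beta_f=\beta$ since $G_+(s)/s\to\beta\ge 0$) and with $f=G_-$ (so $\beta_f=0$ since $G_-(s)/s\to 0$), and subtracting,
\begin{equation*}
\lim_{k\to\infty}\E_\P\big[G_+(\S_{T\wedge\tau_k^\P})-G_-(\S_T)\big]=\E_\P[G(\S_T)]+\beta\big(s_0-\E_\P[\S_T]\big)=\int G\,\td\mu_n+\beta\!\left(s_0-\int s\,\mu_n(\td s)\right),
\end{equation*}
which is constant in $\P$; this establishes the second equality in~\eqref{eq:Euro_bubble}.

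\textbf{(iii) Lower bound.} Fix $\P\in\lMMUS$ and any $(X,\Delta)\in\AA_p$ super-replicating $G$ on $\set$. Stability of $\set$ under stopping (Assumption~\ref{ass:set}) combined with Remark~\ref{rk:collateral} and $G(\omega)=G(\omega(T))$ gives $X(\S^{T\wedge\tau_k^\P})+\int_0^{T\wedge\tau_k^\P}\Delta_{u-}\,\td\S_u\ge G(\S_{T\wedge\tau_k^\P})$. Under $\P$ the integral against the true martingale $\S^{\tau_k^\P}$ with bounded $\Delta$ has mean $0$, and each put in $X$ is bounded so bounded convergence gives $\E_\P[X(\S^{T\wedge\tau_k^\P})]\to \PP(X)$. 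Hence $\PP(X)\ge\limsup_k\E_\P[G(\S_{T\wedge\tau_k^\P})]$; by the identity of (ii) applied to $G_+$ and $G_-$, this limit exists and equals the integral expression. Taking the infimum over $(X,\Delta)$ and then the supremum over $\P$ closes the chain. The main obstacle is the uniform integrability of $\{h(\S_{T\wedge\tau_k^\P})\}$: although $(\S_{T\wedge\tau_k^\P})$ itself is typically \emph{not} UI---its $L^1$ defect $s_0-\E_\P[\S_T]$ is precisely the bubble---its composition with any sublinear $h$ is UI, and this is the mechanism by which the bubble enters the super-hedging price.
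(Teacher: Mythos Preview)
Your proof is correct and covers all three equalities, but it takes a route that differs from the paper's in two of the three steps.

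For the upper bound, both you and the paper reduce to the discrete $n$-period problem via the embedding $\dualpp\le\dualppd$. The paper then simply quotes the explicit one-period computation \eqref{eq:dt_dualitygap} (using only the terminal marginal $\mu_n$) to obtain the value $\int G\,\td\mu_n+\beta(s_0-\int s\,\td\mu_n)$ directly. You instead decompose $G=\beta s+H$ and combine Corollary~\ref{eq:puts_noduality_extended} for the sublinear part $H$ with the trivial hedge $(\beta s_0,\beta)$ for $\beta\S_T$ via subadditivity. This is fine, with one caveat: Corollary~\ref{eq:puts_noduality_extended} (through Theorem~\ref{thm:bubble_duality_put}) is stated for $G_M$ bounded from above on all of $\R_+^n$, whereas you only check boundedness on $\set_{\vec T}$. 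Since the superhedging constraint is imposed only on $\set_{\vec T}$ this is easily patched (e.g.\ by redefining $H$ off $\set_T$), and the same implicit localisation is present in the paper's derivation of \eqref{eq:dt_dualitygap}, so this is a cosmetic rather than substantive point.

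The genuine methodological difference is in how you handle the limit identity and the lower bound. The paper proceeds by a case split on the sign of $G(\S_{T\wedge\tau_k})$: when positive it uses the stopped collateral inequality, when non-positive the unstopped one, assembling these into a single inequality against a modified strategy $\tilde\Delta$ whose integral has non-positive expectation under any supermartingale measure. It then identifies the lower limit via Fatou's lemma applied to $G_+(s)-(\beta-\epsilon)s$, which is bounded below on $\set_T$. Your approach is more conceptual: you observe that for any $f$ with $f(s)/s\to\beta_f$ on $\set_T$, the sublinear remainder $h=f-\beta_f\cdot$ composed with $\S_{T\wedge\tau_k}$ is uniformly $\P$-integrable (because $\sup_k\E_\P[\S_{T\wedge\tau_k}]=s_0$), so Vitali gives $\E_\P[f(\S_{T\wedge\tau_k})]\to\E_\P[f(\S_T)]+\beta_f(s_0-\E_\P[\S_T])$. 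This cleanly isolates the bubble as the $L^1$-defect of $\S$, visible only through the linear part of $f$; applied to $G$, $G_+$ and $G_-$ it gives both the middle equality in \eqref{eq:Euro_bubble} and, combined with the mean-zero stopped integral, the lower bound---without any case analysis. The paper's Fatou argument is shorter to write down, but your uniform-integrability argument makes the mechanism more transparent and yields the existence of the limit (not merely a $\liminf$ bound) as a by-product.
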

We first give two remarks  before proving the above result.
\begin{remark}
  If the forward price implicit in the put options, \[f_0=\int s\mu_n(\td s)=\lim_{K\to \infty} (K-p_n(K)),\] is cheaper than the spot, $s_0>f_0$, then the market has a bubble. The \emph{market price}, defined as the superhedging price, is strictly greater than the \emph{fundamental price}: \[\sup_{\P\in \lMMUS}\E_\P[G(\S_{T})] = \int G(s)\mu_n(\td s).\] The correction is equal to $\beta^+(s_0-f_0)$. This is the same correction as exhibited in Theorem 5.2 in \citet{CH}, see also Section 6.1 in \citet{JPS2}. Therein, the bubble was driven by a collateral requirement and a strict local martingale property. While the former is a natural trading restriction the latter appears artificial. 
  In our robust framework  a bubble is triggered by trading restrictions and properties of market prices of options. The difference is that 
  we take market prices as given and adopt a robust framework. A bubble arises when these prices are misaligned with the asset price $s_0>f_0$ while an arbitrage does not arise because of the trading restrictions. In our setup the trading restrictions take the form of a short selling ban and, as highlighted in Remark \ref{rk:collateral} above, a collateral requirement.
 \end{remark}
\begin{remark}
  The assumption $\lMMUS\neq \emptyset$ is an implicit assumption on $\set$ and market prices. It is satisfied e.g.\ when $\set$ is equal to all paths, or all continuous paths, which start in $s_0$ and put prices $p_i(K)$ satisfy the properties listed in Condition \ref{cond:market set-up put}. The latter is equivalent to $\vec{\mu}=(\mu_i)_{i=1}^n$, defined via \eqref{eq:def_mus_put}, satisfying Assumption \ref{assumption:bubble_measures}.
\end{remark}
\begin{proof}[of Proposition \ref{prop:bubble}.]
As explained above, we can directly compare the continuous time setting with a discrete time setting from Section \ref{sec:dt_puts} with the same put prices and prediction set $\set_{\vec{T}}=\{(\omega_0,\omega_{T_1},\ldots, \omega_{T_n}): \omega \in \set\}$. Using \eqref{eq:dt_dualitygap}, which is a one-marginal result, we immediately have $$ \dualppd(G)\leq V^{p}_{\mu_n,\set_{T}}(G)= \int G(s)\mu_n(\td s)+\beta^+\Big(s_0-\int_{\R_+}s\mu_n(\td s)\Big)$$ and hence we conclude that
\begin{equation}\label{eq:upperbound_dual}
\dualpp(G)\leq \int G(s)\mu_n(\td s)+\beta^+\Big(s_0-\int_{\R_+}s\mu_n(\td s)\Big).
\end{equation}

Consider a superhedging strategy $(X,\Delta)$ and $\P\in \lMMUS$ with $(\tau_n)$ a reducing sequence for $\S$ under $\P$. If $G(\S_{T\wedge\tau_n})> 0$, it follows from \eqref{eq:collateral} and $G_-\ge 0$ that
\begin{align*}
X(\S_{T\wedge\tau_n})+\int_0^{T\wedge\tau_n}\Delta_{u-} \td \S_u\ge G_+(\S_{T\wedge\tau_n})-G_-(\S_{T}).
\end{align*}
Otherwise, $G(\S_{T\wedge\tau_n})\le 0$ and then 
\begin{align*}
X(\S_{T})+\int_0^{T}\Delta_{u-} \td \S_u\ge G(\S_{T})\ge G_+(\S_{T\wedge\tau_n})-G_-(\S_{T})
\end{align*}
Therefore
\begin{align*}
G_+(\S_{T\wedge\tau_n})-G_-(\S_{T})\le& X(\S_{T})+\big(X(\S_{T\wedge\tau_n})-X(\S_{T})\big)\indicator{G(\S_{T\wedge\tau_n})> 0}+\int_0^{T}\tilde\Delta_{u-} \td \S_u,
\end{align*}
where $\tilde\Delta_u = \Delta_u\indicator{u\leq \tau_n\wedge T} + \Delta_u\indicator{u > \tau_n\wedge T}\indicator{G(\S_{T\wedge\tau_n})\le0}$. We note that $\tilde\Delta\in \AA$ and hence the expectation of the integral is non-positive under $\P$. Further, $\tau_n\wedge T=T$ for $n$ large enough (which may depend on the path) and $X$ is bounded so we may apply dominated convergence theorem to conclude that 
\begin{align}\label{eq:lowerbound_dual}
\limsup_{n\to \infty}\E_{\P}[G_+(\S_{T\wedge\tau_n})-G_{-}(\S_{T})]\le \E_{\P}[X(\S_{T})] = \PP(X)
\end{align}
and hence the LHS is a lower bound on $\dualpp(G)$. Finally we compute the LHS. Note that for any $\epsilon>0$, $G(s)-(\beta-\epsilon)s$ is bounded from below on $\set_{T}$. It follows, applying Fatou's Lemma, and noting that $\P \in \lMMUS$ implies $\S_T$ and $\S_{T \wedge \tau_n}$ are almost surely in $\set_T$, that
\begin{equation*}
\begin{split}
\liminf_{n\to\infty} \E_\P[G_+(\S_{T\wedge\tau_n})]\geq & \liminf_{n\to\infty} \E_\P[G_+(\S_{T\wedge\tau_n})-(\beta-\epsilon)\S_{T\wedge \tau_n}]+(\beta-\epsilon)s_0\\
\geq& \E_\P[G_+(\S_T)-(\beta-\epsilon)\S_T]+(\beta-\epsilon)s_0\\=& \int G_+(s)\mu_n(\td s) + (\beta-\epsilon)\left(s_0-\int s\mu_n(\td s)\right). 
\end{split}
\end{equation*}
We conclude that the upper bound in \eqref{eq:upperbound_dual} coincides with the lower bound obtained by taking $\epsilon\searrow 0$ and inf over superhedging strategies in \eqref{eq:lowerbound_dual}, as required.
\end{proof}

The above statement may be extended to $G$ which depends on the values of the asset at the intermediate maturities, i.e.\ $G(\S)=G(\S_{T_1},\ldots,\S_{T_n})$ using Theorem \ref{thm:no duality put}. We do not pursue it here.

\section{Appendix}\label{sec:app}

\subsection{Preliminary Results}\label{sec:ap-prelim}
In this section and in Sections~\ref{subsection:proof of theorem} and \ref{subsection:proof of theorem_continued}, we assume that $\mu_1,\ldots,\mu_n$ are probability measures on $\R_+$ which have finite first moment. Let $\Pi_{\vec{\mu}}$ be the set of all Borel probability measures on $\Omega$ with marginals $\delta_{s_0}, \mu_1,\ldots,\mu_n$ and denote $\MMNS$ the set of probability measures $\P$ on $\Omega$ such that $\S$ is a supermartingale and $\S_i$ is distributed according to $\mu_i$. We also write $C_b(\R^j_+,\R_+)$ to denote the set of continuous, bounded and non-negative functions $f$  on $\R^j_+$ and $C_{c}(\R^j_+,\R_+)$ for the subset of continuous non-negative and compactly supported functions. 
\begin{lemma}\label{lemma:super_martingale_equivalent}
Let $\pi\in \Pi_{\vec{\mu}}$. Then the following are equivalent:
\begin{enumerate}
\item $\pi\in \MMNS$.
\item For $0\le j\le n-1$ and for every $\Delta\in C_{c}(\R^{j}_+,\R_+)$, we have
\begin{align}\label{eq:super_martingale_equivalent_2}
\int_{\Omega}\Delta(x_1,\ldots,x_j)(x_{j+1}-x_j)\td \pi(x_1,\ldots,x_n)\le 0
\end{align} 
\end{enumerate}
\end{lemma}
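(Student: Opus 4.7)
The plan is to establish the two directions separately. The implication $(1)\Rightarrow (2)$ is essentially immediate from the definition. Given $\pi\in\MMNS$, $\S$ is a $\pi$-supermartingale, so $\E_\pi[\S_{j+1}-\S_j\mid\FF_j]\leq 0$ $\pi$-a.s. For any $\Delta\in C_c(\R^j_+,\R_+)$ the tower property gives
$$\int\Delta(x_1,\ldots,x_j)(x_{j+1}-x_j)\,\td\pi=\E_\pi\big[\Delta(\S_1,\ldots,\S_j)\,\E_\pi[\S_{j+1}-\S_j\mid\FF_j]\big]\leq 0,$$
since $\Delta\geq 0$; integrability of $\S_{j+1}-\S_j$ follows from $\mu_j,\mu_{j+1}$ having finite first moments. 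For $j=0$, $C_c(\R^0_+,\R_+)$ reduces to non-negative constants, and the inequality collapses to $\int x_1\,\td\pi\leq s_0$, which is the supermartingale property at time $1$ relative to the trivial $\FF_0$.

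For $(2)\Rightarrow (1)$, the strategy is to first extend the inequality from $C_c(\R^j_+,\R_+)$ to all bounded non-negative Borel test functions, and then specialize to indicators of $\FF_j$-sets. Fix $j\in\{0,\ldots,n-1\}$ and let $f:\R^j_+\to\R_+$ be bounded Borel with $\|f\|_\infty\leq M$. Denote by $\pi_j$ the marginal law of $(\S_1,\ldots,\S_j)$ under $\pi$. Since $\R^j_+$ is a locally compact Polish space and $\pi_j$ is a Borel probability measure on it, $C_c(\R^j_+)$ is dense in $L^1(\pi_j)$; after truncation we can pick $f_n\in C_c(\R^j_+,\R_+)$ with $0\leq f_n\leq M$ and $f_n\to f$ in $L^1(\pi_j)$, and passing to a subsequence we may assume $f_n\to f$ $\pi_j$-a.e.

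To pass to the limit in $\int f_n(x_{j+1}-x_j)\,\td\pi\leq 0$, I would use the dominator $M(x_j+x_{j+1})$, which is $\pi$-integrable since $\mu_j$ and $\mu_{j+1}$ have finite first moments. Dominated convergence then yields $\int f(x_{j+1}-x_j)\,\td\pi\leq 0$ for every bounded non-negative Borel $f$ on $\R^j_+$. Taking $f=\indicators{A}$ for an arbitrary $A\in\sigma(\S_1,\ldots,\S_j)=\FF_j$ delivers $\E_\pi[\indicators{A}(\S_{j+1}-\S_j)]\leq 0$, i.e.\ $\E_\pi[\S_{j+1}\mid\FF_j]\leq\S_j$, as needed. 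The only real obstacle is the $C_c$-to-bounded-Borel approximation; it is standard for regular Borel measures on locally compact Polish spaces, and the finite first moments of the marginals supply exactly the integrable dominator required to justify passing to the limit.
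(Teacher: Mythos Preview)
Your proof is correct. For $(1)\Rightarrow(2)$ you and the paper argue in essentially the same spirit: the paper approximates $\Delta\in C_c$ from below by simple functions built out of indicators and invokes the indicator version of the supermartingale property, while you go straight through the tower property; these are equivalent one-line arguments.

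For $(2)\Rightarrow(1)$ your route is genuinely different from the paper's. You appeal to the density of $C_c(\R^j_+)$ in $L^1(\pi_j)$ to pass in one step from compactly supported continuous test functions to arbitrary bounded non-negative Borel ones, using $M(x_j+x_{j+1})$ as a $\pi$-integrable dominator. The paper instead extends the inequality in three stages: first to indicators of open bounded sets (approximating $\indicators{A}$ from below by $C_c$ functions), then to indicators of arbitrary open sets by exhaustion, and finally to indicators of Borel sets via outer regularity of $\pi$, controlling the discrepancy on $A_N\setminus A$ by an explicit tail estimate $\int_{\{x_j\geq\sqrt{N}\}}x_j\,\mu_j(\td x)+N^{-1/2}\to 0$. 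Your argument is shorter and leans on a standard functional-analytic fact; the paper's is more hands-on and makes the role of the finite-first-moment hypothesis visible in the tail bound rather than implicit in the choice of dominator.
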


\begin{proof}[Proof of Lemma~\ref{lemma:super_martingale_equivalent}]
(1) asserts that whenever $A\subseteq\R_+^{j}$, $j=0,\ldots,n-1$, is Borel measurable, then 
\begin{align}\label{eq:super_martingale_equivalent_1}
\int_{\Omega_{s_0}}\indicators{A}(x_1,\ldots,x_j)(x_{j+1}-x_j)\td \pi(x_1,\ldots,x_n)\le 0.
\end{align}
To see $\eqref{eq:super_martingale_equivalent_2}\Rightarrow
\eqref{eq:super_martingale_equivalent_1}$, we fix any $j=0,\ldots,n-1$ and $\Delta\in C_{c}(\R^j_+,\R_+)$ and define simple functions $f_k:\R_{+}^{j}\to \R$ by 
$f_k=2^{-k}\lfloor2^k \Delta\rfloor$.
Then $0\le f_k\uparrow \Delta$ and it follows from the Dominated Convergence Theorem and $\eqref{eq:super_martingale_equivalent_1}$ that $\eqref{eq:super_martingale_equivalent_2}$ is satisfied.

To show $\eqref{eq:super_martingale_equivalent_1}\Rightarrow \eqref{eq:super_martingale_equivalent_2}$, first consider $A\in\R_+^{j}$ such that $A$ is open and bounded. Note that $\indicators{A}$ is lower semi-continuous and hence there exists a sequence $(f_k)_{k\ge 1}\in C_{c}(\R^j_+,\R_+)$ such that $0\le f_k\le \indicators{A}$ and $f_k\uparrow \indicators{A}$.  Therefore the Dominated Convergence Theorem implies that
\begin{equation*}
\int_{\Omega}\indicators{A}(x_1,\ldots,x_j)(x_{j+1}-x_j)\td \pi(x_1,\ldots,x_n)\le 0.
\end{equation*}
Now $A\in\R_+^{j}$ is an arbitrary open set. We have $A=\cup_{n\ge 1}A^{(n)}$ with $A^{(n)}:=A\cap\{S\in\R_+^j:\|S\|<n\}$ being open and bounded. Then by Dominated Convergence Theorem
\begin{align*}
\int_{\Omega}\indicators{A}(x_1,\ldots,x_j)(x_{j+1}-x_j)\td \pi(x_1,\ldots,x_n)\le 0.
\end{align*}

At last, if $A\subseteq \R_+^{j}$ is a Borel set, then by Corollary 3.12 in \citet{bruckner2008real}, for every $N>0$, there is an open set $A_{N}\subseteq \R_+^{j}$ such that $A\subseteq A_{N}$ with $\pi(A_{N})\le \pi(A)+\frac{1}{N}$. It follows that
\begin{align*}
&\int_{\Omega}\indicators{A}(x_{j+1}-x_j)\td \pi(x_1,\ldots,x_n)\\
=& \int_{\Omega}\indicators{A_N}(x_{j+1}-x_j)\td \pi(x_1,\ldots,x_n)-\int_{\Omega}\indicators{A_N\setminus A}(x_{j+1}-x_j)\td \pi(x_1,\ldots,x_n) \\
\le& \int_{\Omega}\indicators{A_N\setminus A}(x_1,\ldots,x_j)x_{j}\td \pi(x_1,\ldots,x_n)\\
=& \int_{\Omega}\indicators{A_N\setminus A}\indicator{x_{j}\ge \sqrt{N}}x_{j}\td \pi(x_1,\ldots,x_n)+\int_{\Omega}\indicators{A_N\setminus A}\indicator{x_{j}< \sqrt{N}}x_{j}\td \pi(x_1,\ldots,x_n)\\
\le& \int_{\R_+}\indicator{x_{j}\ge \sqrt{N}}x_{j}\mu_j(dx)+\frac{\sqrt{N}}{N}\quad\to 0\,\text{ as } N\to \infty. 
\end{align*}

\end{proof}

\begin{lemma}\label{lemma:compactness_super_martingale}
For a closed $\set\subseteq \Omega$ the set $\MMUS$ is compact in the weak topology.
\end{lemma}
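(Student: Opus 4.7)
The plan is to apply Prokhorov's theorem, i.e.\ establish that $\MMUS$ is both tight and closed in the weak topology on $\mathcal{P}(\Omega)$, where we identify $\Omega$ with $\R_+^n$ via the coordinates $(\S_1,\ldots,\S_n)$ since $\S_0=s_0$ is fixed. Tightness will be immediate from the fact that the marginals are fixed; the genuine work is in closedness, and the one delicate point is passing to the limit in the supermartingale characterisation of Lemma~\ref{lemma:super_martingale_equivalent}, where the natural test integrands have linear (not bounded) growth.

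For tightness, I would argue as follows. Each $\mu_i$ is a probability measure on $\R_+$ and is therefore tight: given $\epsilon>0$ choose compact $K_i\subset \R_+$ with $\mu_i(\R_+\setminus K_i)\le \epsilon/n$. Then $K:=\{s_0\}\times K_1\times\cdots\times K_n$ is compact in $\Omega$ and, by the union bound, any $\P\in\MMUS$ satisfies $\P(\Omega\setminus K)\le \sum_{i=1}^n \P(\S_i\notin K_i)=\sum_{i=1}^n \mu_i(\R_+\setminus K_i)\le \epsilon$. Hence $\MMUS$ is tight and, by Prokhorov's theorem, relatively compact in the weak topology.

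For closedness, let $(\P_k)\subset \MMUS$ with $\P_k\Rightarrow \P$. Three things have to be checked. First, $\P(\set)=1$: since $\set$ is closed, the Portmanteau theorem gives $\P(\set)\ge \limsup_k \P_k(\set)=1$. Second, $\S_i\sim \mu_i$ under $\P$: the coordinate projections are continuous, so for any $f\in C_b(\R_+)$ we have $\int f\,\td\P\circ \S_i^{-1}=\lim_k\int f\,\td\mu_i=\int f\,\td\mu_i$, hence the marginal laws are preserved. Third, and the main obstacle, is the supermartingale property. By Lemma~\ref{lemma:super_martingale_equivalent} it suffices to verify, for every $j\in\{0,\ldots,n-1\}$ and every $\Delta\in C_c(\R_+^j,\R_+)$, that
\begin{equation*}
\int \Delta(x_1,\ldots,x_j)(x_{j+1}-x_j)\,\td\P(x_1,\ldots,x_n)\le 0.
\end{equation*}
The integrand is continuous but only linearly bounded in $x_{j+1}$, so weak convergence is not directly applicable. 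I would split the integrand as $\Delta(x_1,\ldots,x_j)x_{j+1}-\Delta(x_1,\ldots,x_j)x_j$. The second summand is bounded and continuous (since $\Delta$ has compact support), so its integral passes to the limit. For the first, I would use a truncation argument: for $M>0$ the function $(x_1,\ldots,x_n)\mapsto \Delta(x_1,\ldots,x_j)(x_{j+1}\wedge M)$ is bounded continuous, hence its integral converges along $(\P_k)$; and the remainder satisfies
\begin{equation*}
\left|\int \Delta(x_1,\ldots,x_j)\bigl(x_{j+1}-x_{j+1}\wedge M\bigr)\,\td\P_k\right|\le \|\Delta\|_\infty \int_{\R_+}(x-M)^+\,\td\mu_{j+1}(x),
\end{equation*}
uniformly in $k$, because $\S_{j+1}\sim \mu_{j+1}$ under every $\P_k$. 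Since $\mu_{j+1}$ has finite first moment, the right-hand side tends to $0$ as $M\to\infty$; the same bound (with the same fixed $\mu_{j+1}$) applies to $\P$ by marginal preservation. Sending first $k\to\infty$ and then $M\to\infty$ in the decomposition therefore yields $\int \Delta(x_1,\ldots,x_j)(x_{j+1}-x_j)\,\td\P\le 0$, as $\P_k\in\MMUS$ gives the corresponding inequality for each $k$. This completes closedness, and combined with tightness it shows that $\MMUS$ is compact.
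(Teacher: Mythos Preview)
Your proof is correct and follows essentially the same route as the paper: tightness from the fixed marginals (the paper phrases this as $\MMUS\subset\Pi_{\vec\mu}$ with $\Pi_{\vec\mu}$ compact), closedness of the supermartingale constraint via the characterisation in Lemma~\ref{lemma:super_martingale_equivalent}, and the Portmanteau inequality for the closed set $\set$. The only difference is that where the paper defers the passage to the limit in the supermartingale inequality to Lemma~2.2 of \citet{mass_transport}, you give the explicit truncation/uniform-integrability argument using the fixed marginal $\mu_{j+1}$; this makes your proof self-contained and is exactly the content behind the cited lemma.
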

\begin{proof}
Since $\MMUS$ is a subset of the compact set $\Pi_{\vec{\mu}}$ it suffices to prove that $\MMUS$ is a closed subset of $\Pi_{\vec{\mu}}$. By Lemma~\ref{lemma:super_martingale_equivalent}, 
\begin{align*}
\MMNS=\bigcap_{0\le j\le n-1}\bigcap_{\substack{\Delta:\R^{j}_+\to \R_+\\ \text{continous and bounded}}}\Big\{\int_{\R_+^n}\Delta(x_1,\ldots,x_j)(x_{j+1}-x_j)\td \pi(x_1,\ldots,x_n)\le 0\Big\}.
\end{align*}
Therefore, by Lemma 2.2 in \citet{mass_transport}, $\MMNS$ is a closed subset of $\Pi_{\vec{\mu}}$ in the weak topology. 

To show $\MMUS$ is a closed subset of $\MMNS$, we take any sequence $(\Q_n)\in \MMN$ such that $\Q_n(\set) = 1$ and $\Q_n\to \Q$ for some $\Q\in \MMN$ as $n\to \infty$. Then by weak convergence of measures, for $\set\subseteq \Omega$ closed, $\Q(\set)\ge \limsup_{n\to \infty}\Q_n(\set)=1$. It follows that $\MMU$ is a closed subset of $\MMN$ and hence is closed in $\Pi_{\vec{\mu}}$ in the weak topology.
\end{proof}

To prove Theorems \ref{thm:bubble_duality} and \ref{thm:bubble_duality_put}, we will use the following Monge-Kantorovich duality theorem, which is essentially proposition 2.1 in \citet{mass_transport}. The proposition is rewritten here to suit the notation and purpose of this paper. 
\begin{lemma}\label{lemma:Monge-Kantorovich}
For any $G$ that is upper semi-continuous and bounded from above we have 
\begin{align}
\sup_{\pi\in\Pi_{\vec{\mu}}}\E_{\pi}[G]=\inf\Big\{\PP(X): X\in\AA_o,\text{ s.t. }
X\ge \phi \text{ on } \Omega\Big\},\quad  \textrm{where } o\in \{c,p\}.\label{eq:super-replicating_put_call}
\end{align}
Further, the result remains true with $o=c$ for any upper-semi continuous $G$ that satisfies \ref{eq:Gassumption}.
\end{lemma}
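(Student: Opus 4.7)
The plan is to prove the two inequalities separately, combining the classical multi-marginal Kantorovich duality with an approximation step that replaces arbitrary continuous test functions by linear combinations of puts (resp.\ calls).

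\emph{The easy direction ($\ge$).} For any $X=a_0+\sum_j a_j X_j \in \Lin(\XX_o)$ with $X\ge G$ on $\Omega$ and any $\pi\in \Pi_{\vec\mu}$, the one-dimensional marginals of $\pi$ coincide with $\mu_1,\ldots,\mu_n$, so $\E_\pi[X_j]=\PP(X_j)$ for every traded put or call $X_j$. By linearity of $\PP$ this gives $\E_\pi[X]=\PP(X)$. Integrating $X\ge G$ yields $\E_\pi[G]\le \PP(X)$; taking supremum over $\pi$ and infimum over admissible $X$ delivers $\sup_\pi \E_\pi[G]\le \inf_X \PP(X)$.

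\emph{The hard direction ($\le$).} I would first invoke the classical multi-marginal Kantorovich duality (in the form used in Section~2 of \citet{mass_transport}): for upper semi-continuous $G$ bounded above,
\begin{equation*}
\sup_{\pi\in \Pi_{\vec\mu}} \E_\pi[G] = \inf\Bigl\{\textstyle\sum_{i=1}^n \int u_i\, d\mu_i : u_i \in C_b(\R_+),\ \sum_i u_i(s_i) \ge G(s_1,\ldots,s_n)\Bigr\}.
\end{equation*}
The remaining task is to show that the infimum does not increase when each $u_i$ is required to be a linear combination of $1$ and put (resp.\ call) payoffs in the $i$-th variable. Fix $\epsilon>0$ and admissible continuous bounded $(u_1,\ldots,u_n)$. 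In the put case ($o=p$), each $u_i$ is approximated uniformly from above by a piecewise linear function $\tilde u_i$ that is constant outside a compact interval, with $\|\tilde u_i-u_i\|_\infty<\epsilon/n$; such functions are exactly the finite combinations of $1$ and $\{(K-s)^+\}_{K\ge 0}$, so $\tilde u_i$ lies in the one-coordinate restriction of $\Lin(\XX_p)$. The inequality $\sum_i \tilde u_i \ge \sum_i u_i \ge G$ is preserved, and $\sum_i \int \tilde u_i\, d\mu_i \le \sum_i \int u_i\, d\mu_i + \epsilon$; sending $\epsilon\to 0$ concludes. The call case ($o=c$) is identical after replacing eventually-constant piecewise linear approximations with piecewise linear approximations of at most linear growth, expressed using $1$ and $\{(s-K)^+\}_{K\ge 0}$.

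\emph{Extension to linear-growth $G$ for calls.} To handle $G$ satisfying \eqref{eq:Gassumption} I would subtract the affine combination $H(s_1,\ldots,s_n):=K(1+s_1+\cdots+s_n)$, which lies in $\Lin(\XX_c)$ since each $s_i=(s_i-0)^+$, so that $G-H\le 0$ is bounded above. Applying the bounded-$G$ case to $G-H$ and adding back $\PP(H)$, which equals $\E_\pi[H]$ for every $\pi\in\Pi_{\vec\mu}$ because $H$ is a sum of static calls plus a constant, gives the claim for $G$.

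\emph{Main obstacle.} The delicate step is the approximation argument: piecewise linear approximations from above must preserve the pointwise inequality $\sum_i u_i(s_i)\ge G$ uniformly while keeping the $\mu_i$-integrals under control, especially at infinity in the call case. This is essentially the content of Proposition~2.1 of \citet{mass_transport}, and the present lemma is a mild reformulation tailored to our put/call market inputs.
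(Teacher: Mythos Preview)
Your approach coincides with the paper's: both invoke the $C_b$-duality from \citet{mass_transport} (their Proposition~2.1 and its equation~(A.1)) and then replace each $u_i\in C_b(\R_+)$ by a finite put/call combination dominating it with nearly the same $\mu_i$-integral; the extension to linear-growth $G$ in the call case by subtracting $K(1+\sum_i s_i)\in\Lin(\XX_c)$ is also the standard reduction.

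There is one overstatement worth correcting. You claim a \emph{uniform} approximation $\|\tilde u_i - u_i\|_\infty < \epsilon/n$ with $\tilde u_i$ piecewise linear and eventually constant. This fails for general $u_i\in C_b(\R_+)$: take $u_i(s)=\sin s$; any eventually-constant majorant must satisfy $\tilde u_i(s)\ge 1$ on the tail, so $\tilde u_i-u_i\ge 2$ at infinitely many points. What you actually need---and what the paper states---is only $\tilde u_i\ge u_i$ together with $\int(\tilde u_i - u_i)\,d\mu_i<\epsilon/n$. This weaker claim is obtained by choosing $R$ with $\mu_i((R,\infty))$ small, approximating $u_i$ uniformly from above on the compact interval $[0,R]$ by a piecewise linear function, and extending by the constant $\sup u_i$ beyond $R$; the tail contributes at most $2\|u_i\|_\infty\,\mu_i((R,\infty))$ to the integral difference. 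With this amendment your argument is complete and identical in spirit to the paper's.
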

The call option case is just proposition 2.1 in \citet{mass_transport}. The put option case follows from Equation (A.1) in the proof of proposition 2.1 in \citet{mass_transport}, which in our notation simply states that for any $G$ that is upper semi-continuous and bounded from above
\begin{align}
\sup_{\pi\in\Pi_{\vec{\mu}}}\E_{\pi}[G]=\inf\Big\{\sum_{i=0}^n\int_{\R_+}u_i\td \mu_i: u_i\in C_b(\R_+,\R) \text{ s.t. }
\sum_{i=1}^n u_i(\S_i)\ge G(\S) \text{ on $\Omega$}\Big\}.\label{eq:super-replicating_put}
\end{align}
Note that given any $f\in C_b(\R_+,\R)$, $\epsilon>0$ and $i=1,\ldots,n$, there is some $u:\R_+\to \R$ taking the form $a_0+\sum_{j=1}^m a_j(K_j-s_j)^+$ such that $u\ge f$ and $\int (u-f)\td \mu_i<\epsilon$. Therefore we may change the class of admissible functions in \eqref{eq:super-replicating_put} from $C_b(\R_+,\R)$ to $u:\R_+\to \R$ taking the form $a_0+\sum_{j=1}^m a_j(K_j-s)^+$.

\subsection{Proofs of FTAP in the setting with traded calls or puts: case where $\set=\Omega$}\label{Appendix:B}
The proofs of Propositions~\ref{proposition:no_arbitrage} and \ref{proposition:no_arbitrage put} are virtually identical so we only give the proof of Proposition \ref{proposition:no_arbitrage}. We include with it the proof of Proposition \ref{prop:arbitragecalls}. In this section, we only give the proof in the case where $\set=\Omega$. This allows us then to prove Theorem~\ref{thm:bubble_duality} when $\set=\Omega$ which in turn is used to establish the results when $\set\subsetneq \Omega$.
\smallskip\\
\textbf{Step 1}: ``$\exists$ MCM (Market Calibrated Model) $\Longrightarrow$ no WFLVR''\\
First we will show that the existence of a market calibrated model implies no WFLVR. Fix a market calibrated model $\P$ and any $(X_k,\Delta_k)\in \AA_\XX$ and $(X,\Delta)\in \AA_\XX$ such that $\Psi_{X_k,\Delta_k}\to 0$ pointwise on $\set$, $\lim_k\PP(X_k)$ is well defined and $\Psi_{X_k,\Delta_k}\ge \Psi_{X,\Delta}$. Then by Fatou's lemma we know $\liminf_{k}\E_{\P}[\Psi_{X_k,\Delta_k}] \ge \E_{\P}[\liminf_{k}\Psi_{X_k,\Delta_k}]=0$ and hence   
\begin{align*}
\lim_{k}\PP(X_k)=\lim_{k}\E_{\P}[X_k]\ge \liminf_{k}\E_{\P}[\Psi_{X_k,\Delta_k}]\ge \E_{\P}[\liminf_{k}\Psi_{X_k,\Delta_k}]=0.
\end{align*}
\textbf{Step 2}: ``no WFLVR $\Longrightarrow$ Condition \ref{cond:market set-up}"\\
It is straightforward and classical that the absence of a robust uniformly strong arbitrage implies Condition \ref{cond:market set-up} (i)--(ii). Note that since $c_i(\cdot)$ are convex $c_i(0+)$ is well defined. Let $\alpha_i\,:=\,\lim_{K\to \infty}c_i(K)$ which is well-defined by Condition \ref{cond:market set-up} (i) with $\alpha_i\ge 0$ for any $i=1,\ldots,n$. If $\alpha_i>0$ for some $i=1,\ldots,n$ then $(X_k,(0))$ with $X_k=-(\S_i-k)^+$ is a WFLVR  since $X_k\to 0$ pointwise as $k\to \infty$ and $\PP(X_k)=-c_i(k)\to-\alpha_i<0$. If Condition \ref{cond:market set-up} (iv) is violated, then for some $K\in \R_+$ and $i$,   $c_i(0)-c_i(K)< c_{i+1}(0)-c_{i+1}(K)$. Let $X=(\S_i-0)^+-(\S_i-K)^+-(\S_{i+1}-0)^++(\S_{i+1}-K)^+\in \XX_c$, $\Delta_i=\indicator{\S_i<K}$ and $\Delta_j=0$ for $j\neq i$. Then $(X,\Delta)$ is a robust uniformly strong arbitrage since $\Psi_{X,\Delta}\ge 0$ but $\PP(X)< 0$. We conclude that no WFLVR implies Condition \ref{cond:market set-up}. Moreover, absence of a robust uniformly strong arbitrage implies Condition \ref{cond:market set-up} (i), (ii) and (iv).
\smallskip\\
\textbf{Step 3}: ``If $\set=\Omega$ then: Condition \ref{cond:market set-up} $\Longrightarrow$ $\exists$ MCM"\\
Next we show that Condition \ref{cond:market set-up} implies the existence of a market calibrated model when  $\set=\Omega$. It follows from Condition \ref{cond:market set-up} (i), (ii) and (iii) that we can derive probability measures $\vec{\mu}=(\mu_1,\ldots,\mu_n)$ on $(\R_+, \BB(\R_+))$ from the observed prices of call options such that for any $i=1,\ldots,n$ and $K\in \R_+$ 
\begin{align*}
\po_i(K):=\int(x-K)^+\mu_i(\td x)\;\; \text{ and } \;\;\po_i(0)=\int x\mu_i(\td x),
\end{align*}
where $\BB(\R_+)$ is the Borel $\sigma$-algebra of $\R_+$. In fact, due to \citet{breeden1978prices}, $\mu_i$ can be defined via
\begin{align*}
\mu_i([0,K])= 1+c^{\prime}_{i}(K) \quad \text{ for }K\in \R_+. 
\end{align*}
In addition, given $\mu_1,\ldots,\mu_n$ derived from the observed market prices of call options, Strassen's Theorem (\citep{Strassen}) states that Condition \ref{cond:market set-up} (iv) holds if and only if for any convex non-increasing function $\phi:\R_+\to \R$, the sequence $(\int\phi \,d\mu_i)_{i\ge 1}$ is non-decreasing, which is the necessary and sufficient conditions for the existence of a supermartingale on $\R_+^n$ having marginals $\mu_1,\ldots,\mu_n$.  Therefore, when $\set=\Omega$, the absence of WFLVR implies the existence of a market calibrated model which happens if and only if Condition \ref{cond:market set-up} is satisfied.
\smallskip\\
\textbf{Step 4}: ``Additional arguments for Proposition \ref{prop:arbitragecalls}"\\
Given the above steps, to show Proposition \ref{prop:arbitragecalls}, it remains to argue that Condition \ref{cond:market set-up} (i), (ii) and (iv) imply that there is no robust uniformly strong arbitrage when $\set=\Omega$.  Suppose to the contrary that there exists a semi-static strategy $(X,\Delta)$ such that $\Psi_{X,\Delta}\ge 0$ and $\PP(X) < \epsilon < 0$. As $X$ is a finite linear combination of elements of $\XX_c$, we let $K_{\text{max}}$ be the largest among the strikes of call options present in $X$. Then, for any $\delta > 0$ small enough, there exists a sequence of functions $(c_i^{(\delta)})_{i=1}^n$ satisfying Condition \ref{cond:market set-up} (i)--(iv) and such that $|c_i(K)- c^{(\delta)}_i(K)|\le \delta$, for any $i=1,\ldots,n$ and $K \le K_{\text{max}}$. In fact, we can construct $(c_i^{(\delta)})_{i=1}^n$ in the following way. For any $i=1,\ldots,n$, we first define $\tilde{c}_i$ by $\tilde{c}_i(K) = c_i(0) - (1-\frac{\delta i}{2ns_0})(c_i(0) - c_i(K))$ if $c^{\prime}_i(0+)<0$, $\tilde{c}_i(K) = (c_i(0) - \frac{\delta (n+1 -i)}{2nK_{\text{max}}}K)\vee 0$ otherwise. Note that if $c^{\prime}_i(0+)=0$, then $c_j \equiv c_j(0)$ for any $j\ge i$. Then, $|\tilde{c}_i(K) - c_i(K)|\le \delta/2$ for $K\le K_{\text{max}}$ and for $\delta$ sufficiently small $(\tilde{c}_i)_{i=1}^n$ satisfies Condition \ref{cond:market set-up} (i)--(iv) and $\tilde{c}_i(K) - \tilde{c}_i(0)$ is strictly decreasing in $i$ for $K\in (0,K_{\text{max}}]$. Then, for any $i=1,\ldots,n$, we can clearly find convex, decreasing function $c^{(\delta)}_i$ which approximates $\tilde{c}_i$ arbitrarily closely on $[0, K_{\text{max}}]$ and satisfies $c^{(\delta)}_i(0)=\tilde{c}_i(0)$, $\tilde{c}_{i+1}(0)-\tilde{c}_{i+1}\ge c^{(\delta)}_i(0)-c^{(\delta)}_i\ge \tilde{c}_{i}(0)-\tilde{c}_{i}$ and $c^{(\delta)}_i(K)\to 0$ as $K\to \infty$. By the arguments above, with $\PP^{(\delta)}$ corresponding to prices $(c_i^{(\delta)})$, $\PP^{(\delta)}$ and $(c_i^{(\delta)})$ satisfy no WFLVR and hence no robust uniformly strong arbitrage, so $\PP^{(\delta)}(X)\ge 0$. However, we can take $\delta$ small enough so that $|\PP(X)-\PP^{(\delta)}(X)| < \epsilon/2$ which gives the desired contradiction and completes the proof of Proposition \ref{prop:arbitragecalls}.

\subsection{Proof of Theorems \ref{thm:bubble_duality} and \ref{thm:bubble_duality_put}: case where $\set=\Omega$ and $G$ is bounded}\label{subsection:proof of theorem}

We now give the proof of Theorems~\ref{thm:bubble_duality} and \ref{thm:bubble_duality_put} for bounded and upper semi-continuous $G$ in the case where $\set=\Omega$. In this case, we can apply Proposition \ref{proposition:no_arbitrage} or \ref{proposition:no_arbitrage put}. Proof of the general case will be postponed. Since the proof of Theorem \ref{thm:bubble_duality_put} is virtually identical to that of Theorem \ref{thm:bubble_duality}, we only give the proof of Theorem~\ref{thm:bubble_duality}.

\begin{proof}[Proof of Theorems~\ref{thm:bubble_duality}]

We first prove Theorem~\ref{thm:bubble_duality} for bounded and upper semi-continuous $G$ in the case where $\set = \Omega$.  

By Proposition~\ref{proposition:no_arbitrage} in the case where $\set = \Omega$, absence of WFLVR is equivalent to $\MM^-_{\XX_c,\PP,\Omega}\neq 0$, for which to hold Condition \ref{cond:market set-up} is both necessary and sufficient. Following the classical arguments in \citet{breeden1978prices}, by defining probability measures $\mu_i$ on $\R_+$ via
\begin{align}
\mu_i([0,K])= 1+c^{\prime}_{i}(K)  \quad \text{ for }K\in \R_+,
\end{align}
we can encode the market prices $\PP$, or $c_i(K)$, via $(\mu_i)$ with $c_i(K)=\PP((\S_i-K)^+)=\int (s-K)^+\mu_i(\td s)$. Hence $\MM^-_{\XX_c,\PP,\Omega}=\MMNS$.

By Remark~\ref{rk:primallessthandual}, to establish \eqref{eq:thm_bubble_duality_original}, it suffices to show $V_{\XX_c,\PP,\Omega}(G)\le P_{\XX_c,\PP,\Omega}(G)$. 

Define $G_{\Delta}:\R_+^n\to [-\infty,\infty)$ by
$$G_{\Delta}(\S):=G(\S)-\sum_{j=0}^{n-1}\Delta_j(\S_1,\ldots,\S_j)(\S_{j+1}-\S_j).$$
It is clear that if $\Delta_j\in C_{c}(\R^{j}_{+},\R_+)$ for every $j$ then $G_{\Delta}(\S)$ is upper semi-continuous and bounded, and hence satisfies \eqref{eq:Gassumption}. We can deduce that 
\begin{align}
V_{\XX_c,\PP, \Omega}(G)
=& \inf_{(X,\Delta)\in\AA_c\text{ s.t. } \Psi_{X,\Delta}\ge G}\PP(X)
 \label{eq:thm_last step_2}\\
\le & \inf_{\Delta_j\in C_{c}(\R^{j}_{+},\R_+)}\inf\Big\{\PP(X): X\in\AA_c,\text{ s.t. }
X\ge G_{\Delta} \text{ on } \Omega\Big\} \label{eq:thm_last step_3}\\
=& \inf_{\Delta_j\in C_{c}(\R^{j}_{+},\R_+)}\sup_{\pi\in\Pi_{\vec{\mu}}}\Big\{\int_{\R_+^n}G_{\Delta}(s_1,\ldots,s_n)\td \pi(s_1,\ldots,s_n)\Big\}  \label{eq:thm_last step_4}\\
=& \sup_{\pi\in\Pi_{\vec{\mu}}}\inf_{\Delta_j\in C_{c}(\R^n_{+},\R_+)}\Big\{\int_{\R_+^n}G_{,\Delta}(s_1,\ldots,s_n)\td \pi(s_1,\ldots,s_n)\Big\}, \label{eq:thm_last step_5}
\end{align}
where the equality between \eqref{eq:thm_last step_3} and \eqref{eq:thm_last step_4} is guaranteed by Lemma \ref{lemma:Monge-Kantorovich}. To justify the equality between \eqref{eq:thm_last step_4} and \eqref{eq:thm_last step_5} we apply Min-Max Theorem (see Corollary 2 in \citet{min_max_terkelsen1972}) to the compact convex set $\Pi_{\vec{\mu}}$, the convex set $\R_+\times C_{c}(\R_+,\R_+) \times\ldots\times C_{c}(\R^{n-1}_+,\R_+)$, and the function
\begin{align*}
f(\pi,(\Delta_j))=\int_{\R_+^n} G_{\Delta}(s_1,\ldots,s_n)\td \pi(s_1,\ldots,s_n).
\end{align*}
Clearly $f$ is affine in each of the variables. Furthermore, by weak convergence of measures, $f(\cdot,(\Delta_j))$ is upper semi-continuous on $\Pi_{\vec{\mu}}$. Therefore the assumptions of Corollary 2 in \citet{min_max_terkelsen1972} are satisfied.

The last step is to establish the following equality 
\begin{align}\label{eq:thm_last step_6}
\sup_{\pi\in\Pi_{\vec{\mu}}}\inf_{\Delta_j\in C_{c}(\R^{j}_{+},\R_+)}\Big\{\int_{\R_+^n}G_{\Delta}(s_1,\ldots,s_n)\td \pi(s_1,\ldots,s_n)\Big\}=\sup_{\P\in \MMNS}\E_{\P}[G(\S)].
\end{align} 
If $\pi\notin\MMNS$, then by Lemma~\ref{lemma:super_martingale_equivalent}, there is a $\Delta_j\in C_{c}(\R_+^j,\R_+)$ for some $j$ such that
\begin{align*}
B=\int_{\R_+^n}\Delta_j(s_1,\ldots,s_j)(s_{j+1}-s_j)\td \pi(s_1,\ldots,s_n)> 0.
\end{align*}
By scaling, $B$ can be arbitrarily large. Hence, if $\pi\notin\MMNS$, then 
\begin{align}
\inf_{\Delta_j\in C_{c}(\R^{j}_{+},\R_+)}\Big\{\int_{\R_+^n}G_{\Delta}(s_1,\ldots,s_n)\td \pi(s_1,\ldots,s_n)\Big\}=-\infty.\label{eq:thm_penalty_infinity}
\end{align}
Since $G$ is bounded and $\MMNS\neq \emptyset$, $\displaystyle V_{\XX_c,\PP}(G)\ge \sup_{\P\in \MMNS}\E_{\P}[G]>-\infty$.

Therefore, in the LHS of \eqref{eq:thm_last step_6}, it suffices to consider $\pi \in \MMNS$ and then
$$ \inf_{\Delta_j\in C_{c}(\R^{j}_{+},\R_+)}\sum_{j=0}^{n-1}\int\Delta_j(s_1,\ldots,s_j)(s_{j}-s_{j+1})\td \pi = 0.$$
Hence
\begin{align*}
&\sup_{\pi\in\MMNS}\inf_{\Delta_j\in C_{c}(\R^j_{+},\R_+)}\Big\{\int_{\R_+^n}G_{\Delta}(s_1,\ldots,s_n)\td \pi(s_1,\ldots,s_n)\Big\}\\
\le&\sup_{\P\in\MMNS}\E_{\P}[G]+\sup_{\pi\in\MMNS}\inf_{\Delta_j\in C_{c}(\R^{j}_{+},\R_+)}\sum_{j=0}^{n-1}\int\Delta_j(s_1,\ldots,s_j)(s_{j}-s_{j+1})\td \pi=\sup_{\P\in\MMNS}\E_{\P}[G].
\end{align*}
\end{proof}

\subsection{Completing the proof of Proposition \ref{proposition:no_arbitrage}}\label{Appendix:B_continued}

In this section, we complete the proof of Proposition~\ref{proposition:no_arbitrage} in the case where $\set\subsetneq\Omega$.
\smallskip\\
\textbf{Step 5}: ``no WFLVR  $\Longrightarrow$ $\exists$ MCM"\\
It remains to argue that when $\set$ is a closed subset of $\Omega$ such that $\set\subsetneq \Omega$ and Condition \ref{cond:market set-up} is satisfied, no market calibrated model concentrated on $\set$ implies the existence of a WFLVR. In fact, in this case it is a robust uniformly strong arbitrage. Define a lower semi-continuous function $\lambda_{\set}:\R_+^n\to \R$ by 
\begin{align}
\lambda_{\set}(s_1,\ldots,s_n)= \indicator{(s_1,\ldots,s_n)\notin \set}.\label{eq:penalty_prediction_set}
\end{align}
Then we apply Theorem~\ref{thm:bubble_duality} to the prediction set $\Omega$ and $-\lambda_{\set}$ and find that
\begin{align*}
V_{\XX_c,\PP,\Omega}(-\lambda_{\set})=\sup_{\P\in\MM^-_{\XX_c,\PP,\Omega}}\E_{\P}[-\lambda_{\set}]:=\alpha.
\end{align*}
If $\alpha=0$, then there exists a sequence $(\P_k)_{k\in \N}\in \MM^-_{\XX_c,\PP,\Omega}$  such that $\P_k(\set^{\complement})\to 0$. By Lemma \ref{lemma:compactness_super_martingale}, $\MM^-_{\XX_c,\PP,\Omega}$ is compact and closed. Hence $(\P_k)_{k\in \N}$ has a subsequence converging to some $\P\in \MM^-_{\XX_c,\PP,\Omega}$. In fact, by weak convergence of measures, $\P(\set^{\complement})=0$ and hence $\P\in \MM^-_{\XX_c,\PP,\set}$. This shows that, in the absence of a market calibrated model, then $\alpha<0$. Therefore $\alpha<0$ and we can conclude that no market calibrated model concentrated on $\set$ implies the existence of robust uniformly strong arbitrage (and hence WFLVR). Together with the results of Section~\ref{Appendix:B}, this completes the proof of Proposition~\ref{proposition:no_arbitrage}.
\smallskip\\

With the complete proof of Proposition~\ref{proposition:no_arbitrage}, we are now able to get a proof of Theorem~\ref{thm:bubble_duality} in the general case where $\set \neq \Omega$.

\subsection{Completing the proof of Theorems \ref{thm:bubble_duality} and \ref{thm:bubble_duality_put}: case where $\set\subseteq\Omega$}\label{subsection:proof of theorem_continued}

We now complete the proof of Theorems~\ref{thm:bubble_duality} and \ref{thm:bubble_duality_put} for $G$ satisfying \eqref{eq:Gassumption} in the case where $\set \subseteq \Omega$. Again, since they are virtually identical, we only give the proof of Theorem~\ref{thm:bubble_duality} here. 

If \eqref{eq:thm_bubble_duality_original} holds for $G$, then \eqref{eq:thm_bubble_duality_original} is still true for $\tilde{G}=G+X$, for any $X$ taking the form of $a_0+\sum_{i=1}^na_i(S_i-K_i)^+$. Therefore, without loss of generality, we may and will assume that $G$ is bounded from above.

Recall from \eqref{eq:penalty_prediction_set} that $\lambda_{\set}(s_1,\ldots,s_n)= \indicator{(s_1,\ldots,s_n)\notin \set}$ is bounded and lower semi-continuous and hence $G\vee (-N)-N\lambda_{\set}$ is bounded and upper semi-continuous for each $N\in \N$. We also notice that $\dualpc(G)\le V_{\XX_c,\PP, \Omega}(G\vee (-N)-N\lambda_{\set})$ for each $N\in \N$, as a super-replicating portfolio of $G\vee (-N)-N\lambda_{\set}$ on $\Omega$ naturally super-replicates $G$ on $\set$. Thus 
\begin{align*}
\dualpc(G)\le& \inf_{N\ge 0}V_{\XX_c,\PP, \Omega}(G\vee (-N)-N\lambda_{\set})\\
 =&\inf_{N\ge 0}P_{\XX_c,\PP, \Omega}(G\vee (-N)-N\lambda_{\set}) =  \inf_{N\ge 0}\sup_{\P\in \MMNS}\E_{\P}[G\vee (-N)-N\lambda_{\set}].
\end{align*}
Define $f_N:\MMNS \to (-\infty, \infty)$ by 
$f_N(\P) = \E_{\P}[ G\vee (-N) - N\lambda_{\set}]$. Note that $f_N$ is upper semi-continuous on $\MMNS$ by weak convergence of measures and $f_N\ge f_{N+1}$ for every $N\in \N$. Hence, applying Min-Max Theorem (see Corollary 1 in \citet{min_max_terkelsen1972}) to the compact convex set $\MMNS$ and $(f_{N})_{N\in \N}$, we have 
\begin{align*}
\inf_{N\ge 0}\sup_{\P\in \MMNS}\E_{\P}[G\vee (-N)-N\lambda_{\set}]
=\sup_{\P\in \MMNS}\inf_{N\ge 0}\E_{\P}[G\vee (-N)-N\lambda_{\set}].
\end{align*}
Define $G_{\set}:\Omega\to [-\infty, \infty)$ by $G_{\set} = G$ on $ \set$ and $-\infty$ elsewhere. Note that $G_{\set}$ is the point-wise limit of $G\vee (-N) - N\lambda_{\set}$ as $N\to \infty$. Then by Fatou's lemma, 
\begin{equation*}
 \sup_{\P\in \MMNS}\inf_{N\ge 0}\E_{\P}[G\vee (-N)-N\lambda_{\set}]\le \sup_{\P\in \MMNS}\E_{\P}[G_{\set}] = \sup_{\P\in \MMUS}\E_{\P}[G]. 
\end{equation*}
Therefore we have $\dualpc(G) \le \primalpc(G)$, which together with Remark~\ref{rk:primallessthandual} leads us to conclude that
$$ \dualpc(G) = \primalpc(G). $$

\subsection{Proof of Theorem \ref{thm:no duality put}}\label{sec:proof_nmarg_duality}

\begin{proof}
Given a semi-static super-replicating strategy $(X,\Delta)$ we have, by definition, 
\begin{align}
X(s_1,\ldots,s_n)+\sum_{i=0}^{n-1}\Delta_i(s_1,\ldots,s_i)(s_{i+1}-s_i)\ge G(s_1,\ldots,s_n),\quad (s_1,\ldots,s_n)\in\set.\label{eq:case study 3 super-replicating}
\end{align}
We start with the following\\
\textbf{Claim}: \emph{If $(X,\Delta)$ is a semi-static super-replicating strategy of $G$ on $\set$, then $\Delta_j\ge \beta_j$ for any $i=0,\ldots,n-1$.}  \\
We prove the claim by induction.
When $j=n-1$, we fix $\vec{s}_{n-1}:=(s_1,\ldots,s_{n-1})$. Letting $s_n\in \set(\vec{s}_{n-1},n):=\{x\,:\,(s_1,\ldots,s_{n-1},x)\in\set\}$ go to infinity, it follows from \eqref{eq:case study 3 super-replicating} that
\begin{align*}
\Delta_{n-1}(s_1,\ldots,s_{n-1})\ge \limsup_{x\to\infty,\,x\in\set(\vec{s}_{n-1},n)}\frac{G(s_1,\ldots,s_{n-1},x)}{x}.
\end{align*}
This, together with $\Delta_{n-1}\ge 0$, yields 
\begin{align*}
\Delta_{n-1}(s_1,\ldots,s_{n-1})\ge&\, \limsup_{x\to\infty}\Big\{\frac{G(s_1,\ldots,s_{n-1},x)}{x}\indicators{\set}(s_1,\ldots,s_{n-1},x)\Big\}\vee 0 \\
=&\,\beta_{n-1}(s_1,\ldots,s_{n-1}).
\end{align*}
Now suppose the claim holds for $j=i+1$ with $i\le n-2$. Fix $\vec{s}_{n-1}:=(s_1,\ldots,s_{i},s_{i+2},\ldots,s_{n})$ and denote $\set(\vec{s}_{n-1},i+1):=\{x\,:\,(s_1,\ldots,s_i,x,s_{i+2},\ldots,s_{n})\in\set\}$. If $\set(\vec{s}_{n-1},i+1)$ is unbounded, then by taking $x\in \set(\vec{s}_{n-1},i+1)$ to infinity, \eqref{eq:case study 3 super-replicating} implies 
\begin{align*}
\Delta_{i}(s_1,\ldots,s_{i})\ge \limsup_{x\to\infty,\,x\in \set(\vec{s}_{n-1},i+1)}\Big(\Delta_{i+1}(s_1,\ldots,s_i,x)+\frac{G(s_1,\ldots,s_i,x,s_{i+2},\ldots,s_n)}{x}\Big).
\end{align*}
If the right hand side is non-negative, then
\begin{align*}
&\limsup_{x\to\infty,\,x\in \set(\vec{s}_{n-1},i+1)}\Big(\Delta_{i+1}(s_1,\ldots,s_i,x)+\frac{G(s_1,\ldots,s_i,x,s_{i+2},\ldots,s_n)}{x}\Big)\\
=&\limsup_{x \to \infty}\bigg(\Big(\Delta_{i+1}(s_1,\ldots,s_i,x)+\frac{G(s_1,\ldots,s_i,x,s_{i+2},\ldots,s_n)}{x}\Big)\indicators{\set}(s_1,\ldots,s_i,x,s_{i+2},\ldots,s_n)\bigg).
\end{align*}
Hence, as $\Delta_{i}\ge 0$, when $\set(\vec{s}_{n-1},i+1)$ is unbounded, we have 
\begin{align}
\Delta_{i}(s_1,\ldots,s_{i})\ge& \limsup_{x \to \infty}\bigg(\Big(\Delta_{i+1}(s_1,\ldots,s_i,x)\nonumber\\
&+\frac{G(s_1,\ldots,s_i,x,s_{i+2},\ldots,s_n)}{x}\Big)\indicators{\set}(s_1,\ldots,s_i,x,s_{i+2},\ldots,s_n)\bigg)\vee 0\nonumber\\
\ge& \limsup_{x \to \infty}\bigg(\Big(\beta_{i+1}(s_1,\ldots,s_i,x)\nonumber\\
&+\frac{G(s_1,\ldots,s_i,x,s_{i+2},\ldots,s_n)}{x}\Big)\indicators{\set}(s_1,\ldots,s_i,x,s_{i+2},\ldots,s_n)\bigg)\vee 0.\label{eq:delta_beta_1}
\end{align}
On the other hand, when $\set(\vec{s}_{n-1},i+1)$ is bounded, we notice that 
\begin{align*}
\limsup_{x \to \infty}\bigg(\Big(\beta_{i+1}(s_1,\ldots,s_i,x)+\frac{G(s_1,\ldots,s_i,x,s_{i+2},\ldots,s_n)}{x}\Big)\indicators{\set}(s_1,\ldots,s_i,x,s_{i+2},\ldots,s_n)\bigg)=0.
\end{align*}
Therefore, the inequality in \eqref{eq:delta_beta_1} is true in either case. In addition, as it holds for any $s_1,\ldots,s_i,s_{i+2},\ldots,s_n\in \R_+$, we can conclude that 
\begin{align*}
\Delta_{i}(s_1,\ldots,s_{i})\ge& \sup_{s_{i+2},\ldots,s_{n}\in\R_+}\limsup_{x \to \infty}\bigg(\Big(\Delta_{i+1}(s_1,\ldots,s_i,x)\\
&+\frac{G(s_1,\ldots,s_i,x,s_{i+2},\ldots,s_n)}{x}\Big)\indicators{\set}(s_1,\ldots,s_i,x,s_{i+2},\ldots,s_n)\bigg)\vee 0\\
=&\beta_{i}(s_1,\ldots,s_{i}), \qquad \text{ for any }s_1,\ldots,s_i\in\R_+. 
\end{align*}
This ends the induction and the proof of the claim.

It follows from the claim above that for any $(X,\Delta)\in \AA_p$ that super-replicates $G$ on $\set$ and any $\P\in \MMUS$
\begin{align*}
\E_{\P}[G]\le&\E_{\P}\Big[X(\S)+\Delta_0(\S_{1}-s_0)+\sum_{i=1}^{n-1}\Delta_i(\S_1,\ldots,\S_{i})(\S_{i+1}-\S_i)\Big]\\
\le&\E_{\P}\Big[X(\S)+\beta_0(\S_{1}-s_0)+\sum_{i=1}^{n-1}\beta_i(\S_1,\ldots,\S_{i})(\S_{i+1}-\S_i)\Big],
\end{align*}
which implies that 
\begin{align}
V^{(p)}_{\vec{\mu},\set}(G)\ge\sup_{\P\in\MMUS}\E_{\P}\Big[G-\beta_0(\S_{1}-s_0)-\sum_{i=1}^{n-1}\beta_i(\S_1,\ldots,\S_{i})(\S_{i+1}-\S_i)\Big].\label{eq:multiple_put_correction}
\end{align}

On the other hand, we denote $\ZZ$ by the set of $(\Delta_j)_{j=0}^{n-1}\in \R_+\times \BB_b(\R_+,\R_+) \times\ldots\times \BB_b(\R^{n-1}_+,\R_+)$ such that 
$$ G_{\Delta}(\S):=G(\S)-\sum_{i=0}^{n-1}\Delta_i(\S_1,\ldots,\S_i)(\S_{i+1}-\S_i)$$ is upper semi-continuous and bounded from above on $\set$, where $\BB_b(\R_+^d, \R_+)$ is the set of bounded measurable functions $f:\R_+^d\to\R_+$. Note that $\ZZ$ is a convex subset of $\R_+\times \BB_b(\R_+,\R_+) \times\ldots\times \BB_b(\R^{n-1}_+,\R_+)$. Then we can apply Min-Max Theorem (Corollary 2 in \citet{min_max_terkelsen1972}) to the compact convex set $\MMUS$, $\ZZ$ and the function
\begin{align*}
f(\pi,(\Delta_j))=\int \left(G(s_1,\ldots,s_n)-\Delta_0(s_{1}-s_0)-\sum_{i=1}^{n-1}\Delta_i(s_1,\ldots,s_i)(s_{i+1}-s_i)\right)\,\td\pi(s_1,\ldots,s_n).
\end{align*}
Clearly $f$ is affine in each of the variables. Furthermore, since for any fixed $(\Delta_j)_{j=0}^{n-1} \in \ZZ$ the term $\int_{\R^n_+\setminus [0,a]^n}G_{\Delta}\td \pi$ converges to $0$ uniformly in $\pi\in \MMUS$ as $a\to \infty$, it follows from the portemanteau theorem that $f(\cdot\,,(\Delta_j))$ is upper semi-continuous on $\MMUS$. Therefore the assumptions of Corollary 2 in \citet{min_max_terkelsen1972} are satisfied. Then we find that
\begin{align}
V^{(p)}_{\vec{\mu},\set}(G)
=&\inf\Big\{\PP(X): (X,\Delta)\in \AA_p \text{ s.t. } \Psi_{X,\Delta}\ge G \text{ on } \set\Big\} \label{eq:case study 3 put_duality_2}\\
\le&\inf_{\Delta\in \ZZ}
\inf\Big\{\PP(X)\,:\, (X,\tilde{\Delta})\in \AA_p \text{ s.t. }\Psi_{X,\Delta+\tilde{\Delta}}\ge G\text{ on } \set\Big\}\label{eq:case study 3 put_duality_3}\\
=&\inf_{\Delta\in \ZZ}\inf\Big\{\PP(X)\,:\, (X,\tilde{\Delta})\in \AA_p \text{ s.t. } \Psi_{X,\tilde{\Delta}}(\S)\ge G_{\Delta}(\S) \text{ on } \set\Big\}\label{eq:case study 3 put_duality_4}\\
=&\inf_{\Delta\in \ZZ}\sup_{\P\in \MMUS}\E_{\P}\big[G_{\Delta}(\S)\big] =\sup_{\P\in \MMUS}\inf_{\Delta\in \ZZ}\E_{\P}\big[G_{\Delta}(\S)\big]\nonumber
\end{align}
where the inequality between \eqref{eq:case study 3 put_duality_2} and \eqref{eq:case study 3 put_duality_3} is by restricting the delta hedging terms to a smaller set and the equality between \eqref{eq:case study 3 put_duality_3} and \eqref{eq:case study 3 put_duality_4} follows from Theorem \ref{thm:bubble_duality_put}. 

To conclude, from the assumption we know there exists $(\beta^{(N)})_{N\ge 1}\in\ZZ$ such that $G_{\beta^{(N)}}(\S)=G(\S)-\beta^{(N)}_0(\S_{1}-s_0)-\sum_{i=1}^{n-1}\beta^{(N)}_i(\S_1,\ldots,\S_{i})(\S_{i+1}-\S_i)$ is upper semi-continuous, bounded from above on $\set$ and $G_{\beta^{(N)}}\to G_{\beta}$ pointwise as $N\to \infty$. Hence, by Fatou's Lemma  
\begin{align*}
\limsup_{N\to \infty}\E_{\P}[G_{\beta^{(N)}}(\S)]\le \E_{\P}\Big[G-\sum_{i=1}^{n-1}\beta_i(\S_1,\ldots,\S_i)(\S_{i+1}-\S_i)-\beta_0(\S_1-s_0)\Big]
\end{align*}
holds for any $\P\in \MMUS$ and therefore we have
$$\sup_{\P\in \MMUS}\inf_{\Delta\in \ZZ}\E_{\P}\big[G_{\Delta}(\S)\big]\le\sup_{\P\in \MMUS}\E_{\P}\Big[G-\sum_{i=1}^{n-1}\beta_i(\S_1,\ldots,\S_i)(\S_{i+1}-\S_i)-\beta_0(\S_1-s_0)\Big],$$
which leads us to conclude that
\begin{align*}
V^{(p)}_{\vec{\mu},\set}(G)\le\sup_{\P\in\MMUS}\E_{\P}\Big[G-\beta_0(\S_{1}-s_0)-\sum_{i=1}^{n-1}\beta_i(\S_1,\ldots,\S_{i})(\S_{i+1}-\S_i)\Big].
\end{align*}

\end{proof}

\bibliographystyle{apalike}

\bibliography{bib}

\end{document}